\newcommand{\esssup}{\operatorname{ess\,sup}}
\newcommand{\Var}{\mathrm{Var}}
\newtheorem{assumption}[theorem]{Assumption}
\newtheorem{remark}[theorem]{Remark}
\newcommand*{\rom}[1]{\expandafter\@slowromancap\romannumeral #1@}
\newcommand*\circled[1]{\tikz[baseline=(char.base)]{
            \node[shape=circle,draw,inner sep=2pt] (char) {#1};}}
\newenvironment{breakablealgorithm}
  {
   \begin{center}
     \refstepcounter{algorithm}
     \hrule height.8pt depth0pt \kern2pt
     \renewcommand{\caption}[2][\relax]{
       {\raggedright\textbf{\ALG@name~\thealgorithm} ##2\par}%
       \ifx\relax##1\relax 
         \addcontentsline{loa}{algorithm}{\protect\numberline{\thealgorithm}##2}%
       \else 
         \addcontentsline{loa}{algorithm}{\protect\numberline{\thealgorithm}##1}%
       \fi
       \kern2pt\hrule\kern2pt
     }
  }{
     \kern2pt\hrule\relax
   \end{center}
  }
\numberwithin{theorem}{section}
\newcommand{\TheTitle}{Uncertain Volatility Models with Stochastic Bounds} 
\newcommand{\TheAuthors}{Jean-Pierre Fouque and Ning Ning}
\headers{\TheTitle}{\TheAuthors}
\title{{\TheTitle}\thanks{Submitted  February 15, 2017.
\funding{Research supported by NSF grant DMS-1409434.}}}
\author{
  Jean-Pierre Fouque \thanks{Department of Statistics \& Applied Probability, University of California, Santa Barbara, CA 93106-3110
    (\email{fouque@pstat.ucsb.edu, ning@pstat.ucsb.edu}).}
    \and Ning Ning \footnotemark[2]
}
\begin{document}

\maketitle

\begin{abstract}
  In this paper, we propose the uncertain volatility models with stochastic bounds. Like the regular uncertain volatility models, we know only that the true model lies in a family of progressively measurable and bounded processes, but instead of using two deterministic bounds, the uncertain volatility fluctuates between two stochastic bounds generated by its inherent stochastic volatility process. This brings better accuracy and is consistent with the observed volatility path such as for the VIX as a proxy for instance. We apply the regular perturbation analysis upon the worst case scenario price, and derive the first order approximation in the regime of slowly varying stochastic bounds. The original problem which involves solving a fully nonlinear PDE in dimension two for the worst case scenario price, is reduced to solving a nonlinear PDE in dimension one and a linear PDE with source, which gives a tremendous computational advantage. Numerical experiments show that this approximation procedure performs very well, even in the regime of moderately slow varying stochastic bounds. 
\end{abstract}

\begin{keywords}
  uncertain volatility, stochastic bounds, nonlinear Black--Scholes--Barenblatt PDE, approximation
\end{keywords}

\begin{AMS}
  60H10, 91G80, 35Q93
\end{AMS}

\section{Introduction}
In the standard Black--Scholes model of option pricing \cite{black1973pricing}, volatility is assumed to be known and constant over time, which seems unrealistic. Extensions of the Black--Scholes model to model ambiguity have been proposed, such as the stochastic volatility approach \cite{heston1993closed, hull1987pricing}, the jump diffusion model  \cite{andersen2000jump, merton1976option}, and the uncertain volatility model \cite{avellaneda1995pricing, lyons1995uncertain}. Among  these extensions, the uncertain volatility model has received intensive attention in Mathematical Finance for risk management purpose.

In the uncertain volatility models (UVMs), volatility is not known precisely and assumed between constant upper and lower bounds $\underline{\sigma}$ and $\overline{\sigma}$. These bounds could be inferred from extreme values of the implied volatilities of the liquid options, or from high-low peaks in historical stock- or option-implied volatilities. 

Under the risk-neutral measure, the price process of the risky asset satisfies the following stochastic differential equation (SDE):
\begin{equation}
\label{processX}
dX_t=rX_tdt+\alpha_t X_tdW_t,
\end{equation}
where r is the constant risk-free rate, ($W_t$) is a Brownian motion and the volatility process $(\alpha_t)$ belongs to a family $\mathcal{A}$  of progressively measurable and $[\underline{\sigma}, \overline{\sigma}]$-valued processes.  

When pricing a European derivative  written on the risky asset with maturity $T$ and nonnegative payoff $h(X_T)$, the seller of the contract is interested in the worst-case scenario. By assuming the worst case, sellers are guaranteed coverage against adverse market behavior if the realized volatility belongs to the candidate set. It is known that the worst-case scenario price at time $t < T$ is given by
\begin{equation}
\label{P}
P(t,X_t):=\exp(-r(T-t))\esssup_{\alpha \in \mathcal{A}}\mathbb{E}_t[h(X_T)],
\end{equation}
where $\mathbb{E}_t[\cdot]$ is the conditional expectation given $\mathcal{F}_t$ with respect to the risk neutral measure. 

Following the arguments in stochastic control theory, $P(t,X_t)$ is the viscosity solution to the following Hamilton-Jacobi-Bellman (HJB) equation, which is the generalized
Black--Scholes--Barenblatt (BSB) nonlinear equation in Financial Mathematics,
\begin{equation}
\begin{split}
\partial_t P+r(x\partial_x P-P)+\sup_{\alpha \in [\underline{\sigma}, \overline{\sigma}]}\bigg[\frac{1}{2}x^2\alpha^2\partial_{xx}^2P \bigg]&=0,\\
P(T,x)&=h(x).
\end{split}
\end{equation}

It is well known that the worst case scenario price is equal to its Black--Scholes price with constant volatility $\overline{\sigma}$ (resp. $\underline{\sigma}$)  for convex (resp. concave) payoff function (see \cite{pham2009continuous} for instance). For general terminal payoff functions, an asymptotic analysis of the worst case scenario option prices as the volatility interval degenerates to a single point is given in \cite{fouque2014approximation}.

In fact, for contingent claims with longer maturities, it is no longer consistent with observed volatility to assume that the bounds are constant (for instance by looking at the VIX over years, which is a popular measure of the implied volatility of S\&P 500 index options). 
Therefore, instead of modeling $\alpha_t$ fluctuating between two deterministic bounds, 
\begin{equation*}
\underline{\sigma} \leq \alpha_t \leq \overline{\sigma}, \;\;\;\; \text{for} \;\;\;\; 0\leq t \leq T ,
\end{equation*}
we consider the case that the uncertain
volatility moves between two stochastic bounds,
\begin{equation*}
\underline{\sigma}_t:=d\sqrt{Z_t} \leq \alpha_t \leq \overline{\sigma}_t:=u\sqrt{Z_t}, \;\;\;\; \text{for} \;\;\;\; 0\leq t \leq T ,
\end{equation*}
where $u$ and $d$ are two constants such that $0<d<1<u$, and $Z_t$ can be any other positive stochastic process. 

In this paper, we consider
the general three-parameter CIR process with evolution
\begin{equation}
\label{processZ}
dZ_t=\delta \kappa(\theta-Z_t)dt+\sqrt{\delta}\sqrt{Z_t}dW_t^Z.
\end{equation}
Here, $\kappa$ and $\theta$ are strictly positive parameters with the Feller condition $\theta \kappa \geq \frac{1}{2}$ satisfied to ensure that $Z_t$ stays positive, $\delta$ is a small positive parameter that characterizes the slow variation of the process $Z_t$, and ($W_t^Z$) is a Brownian motion possibly correlated to ($W_t$) driving the stock price, with $d \langle W,W^Z \rangle_t=\rho dt$ for $|\rho|<1$. One realization of these processes is shown in Figure \ref{varyingbounds} with $\delta=.05$. Denoting $\alpha_t:=q_t \sqrt{Z_t}$, then the uncertainty in the volatility can be
absorbed in the uncertain adapted slope
\begin{equation*}
d\leq q_t \leq u, \;\;\;\; \text{for} \;\;\;\; 0\leq t \leq T.
\end{equation*}

\begin{figure}[htbp]
  \centering
  \label{varyingbounds}
  \includegraphics[width=5in,height=3.5in]{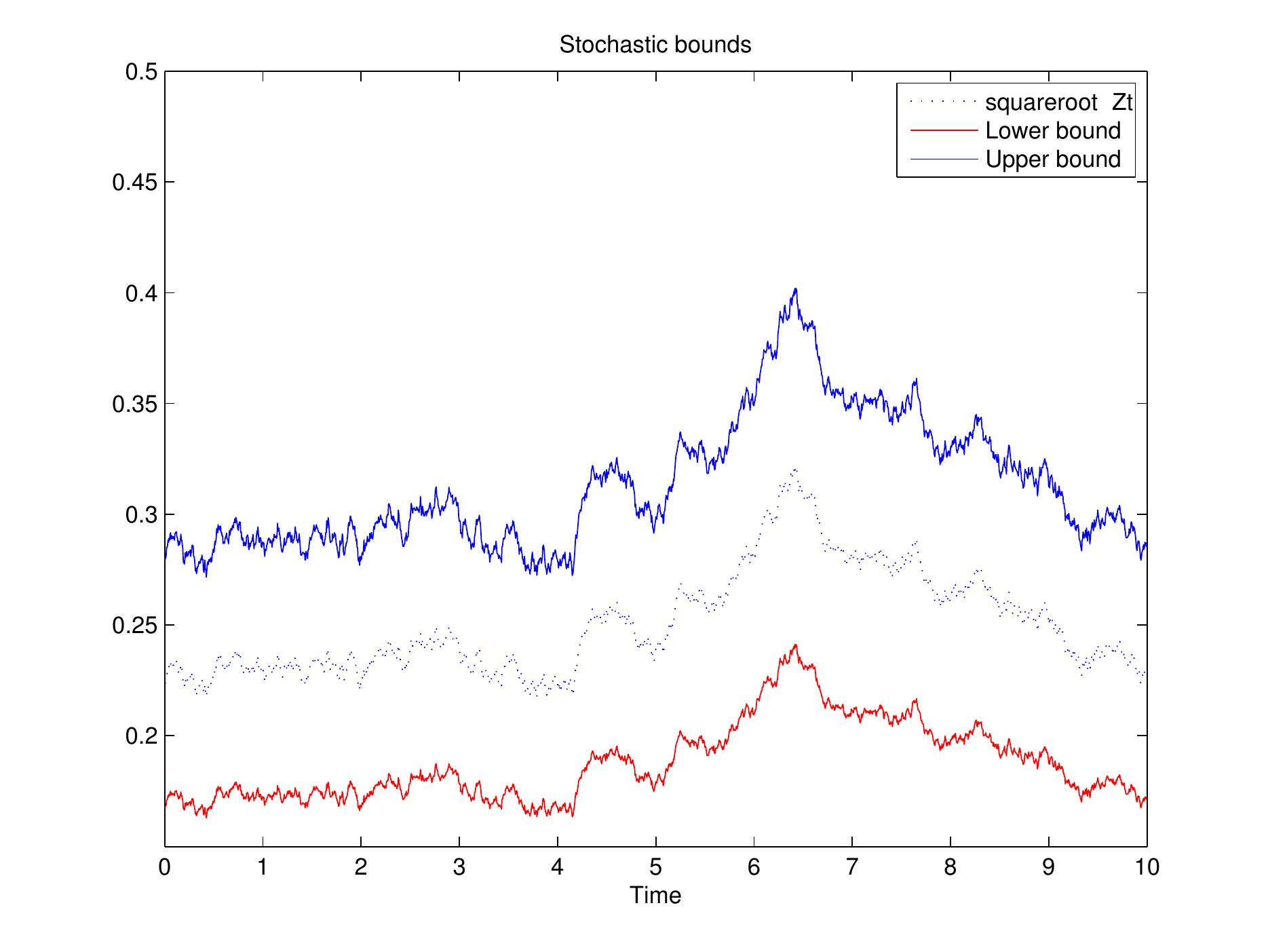}.   
  \caption{\it Simulated stochastic bounds $[0.75\sqrt{Z_t}, 1.25\sqrt{Z_t}]$ where $Z_t$ is the (slow) mean-reverting CIR process \eqref{processZ}.}
\end{figure}

In order to study the asymptotic behavior, we emphasize the importance of $\delta$ and reparameterize the SDE of the risky asset price process as
\begin{equation}
\label{processXdelta}
dX_t^{\delta}=rX_t^{\delta}dt+q_t \sqrt{Z_t} X_t^{\delta}dW_t.
\end{equation}
When $\delta=0$, note that the CIR process $Z_t$ is frozen at $z$, and then the risky asset price process follows the dynamic
\begin{equation}
\label{processX0}
dX_t^{0}=rX_t^{0}dt+q_t \sqrt{z} X_t^{0}dW_t.
\end{equation}	
Both $X_t^{\delta}$ and $X_t^{0}$ start at the same point $x$.

Suppose that $\mathcal{X}$ is a European derivative written on the
risky asset with maturity $T$ and payoff $h(X_T)$. We denote its smallest riskless selling price (worst case scenario) at
time $t<T$ as
\begin{equation}
\label{eq:P_delta}
P^{\delta}(t,x,z):= \exp(-r(T-t))\esssup_{q_{.}\in [d,u]}\mathbb{E}_{(t,x,z)}[h(X_T^{\delta})],
\end{equation}
where
$\mathbb{E}_{(t,x,z)}[\cdot]$ is the conditional expectation given $\mathcal{F}_t$ with $X_t^{\delta}=x$ and $Z_t=z$. When $\delta=0$,  we represent the smallest riskless selling price as 
\begin{equation}
\label{eq:P_0}
P_0(t,x,z)= \exp(-r(T-t))\esssup_{q_{.}\in [d,u]}\mathbb{E}_{(t,x,z)}[h(X_T^{0})],
\end{equation}
where the subscripts in $\mathbb{E}_{(t,x,z)}[\cdot]$ also means that $X_t^{0}=x$ and $Z_t=z$ given the same filtration
$\mathcal{F}_t$. Notice that $P_0(t,X_t,z)$ corresponds to $P(t,X_t)$ in \eqref{P} with constant volatility bounds given by $d\sqrt{z}$ and $u\sqrt{z}$.

The rest of the paper proceeds as follows. In Section \ref{sec:main_result}, we first explain the convergence of the worst
case scenario price $P^{\delta}$ and its second derivative $\partial_{xx}^2 P^{\delta}$ as $\delta$ goes to $0$. We then write down the pricing nonlinear parabolic PDE \eqref{BSB} which characterizes the option price $P^{\delta}(t,x,z)$ as a function of the present time $t$, the value $x$ of the underlying asset, and the levels $z$ of the volatility driving process. At last, we introduce the main result that the first order approximation to $P^{\delta}$ is $P_0+\sqrt{\delta}P_1$ with accuracy in the order of $\mathcal{O}(\delta)$, where we define $P_0$ and $P_1$ in \eqref{P0} and \eqref{P1} respectively. The proof of the main result is presented in Section \ref{sec:proof_main_thm}.  In Section \ref{sec:simulation}, a numerical illustration
is presented. We conclude in Section \ref{sec:conclusion}. Some technical proofs are given in the Appendices.

\section{Main Result}
\label{sec:main_result}
In this section, we first prove the Lipschitz continuity of the worst
case scenario price $P^{\delta}$ with respect to the parameter ${\delta}$. 
Then, we derive the main BSB equation that the worst case scenario price should follow and further identify the first order approximation when ${\delta}$ is small enough. We reduce the original problem of solving the fully nonlinear PDE \eqref{BSB} in dimension two  to solving the nonlinear PDE \eqref{P0} in dimension one  and a linear PDE  \eqref{P1} with source. The accuracy of this approximation is given in Theorem \ref{main_theorem}, the main theorem
of this paper.

\subsection{Convergence of $P^{\delta}$}
\label{sec:convergence}
It is established in Appendix \ref{moment_result} that 
 $X_t$ and $Z_t$ have finite moments uniformly in $\delta$, which leads to the following result:

\begin{proposition} 
\label{X_cvg}
Let $ X^{\delta}$ satisfies the SDE \eqref{processXdelta} and $X^{0}$ satisfies the SDE \eqref{processX0}, then, uniformly in $(q_{\cdot})$,
\begin{equation*}
\mathbb{E}_{(t,x,z)}(X_T^{\delta}- X_T^{0} )^2 \leq C_0\delta 
\end{equation*}
where $C_0$ is a positive constant independent of $\delta$.
\end{proposition}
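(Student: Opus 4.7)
The plan is to control $\mathbb{E}[D_s^2]$ where $D_s := X_s^\delta - X_s^0$ by a Gronwall argument on the $L^2$-norm, starting from $D_t = 0$. Subtracting \eqref{processX0} from \eqref{processXdelta} gives
\[
 dD_s \;=\; r D_s\,ds \;+\; q_s\bigl(\sqrt{Z_s}\,X_s^\delta - \sqrt{z}\,X_s^0\bigr)\,dW_s.
\]
The pivotal algebraic step is the decomposition
\[
 \sqrt{Z_s}\,X_s^\delta - \sqrt{z}\,X_s^0 \;=\; \sqrt{z}\,D_s \;+\; (\sqrt{Z_s}-\sqrt{z})\,X_s^\delta,
\]
which pairs $D_s$ with the deterministic weight $\sqrt{z}$ (so Gronwall will close) and isolates a small residual $(\sqrt{Z_s}-\sqrt{z})X_s^\delta$ driven by the slow variation of $Z$.

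Applying Itô to $D_s^2$ and taking expectations (justified by a standard localisation argument, since the uniform moment bounds from Appendix~\ref{moment_result} make the stochastic integral a true martingale after stopping and passing to the limit), together with $q_s \leq u$ and $(a+b)^2 \leq 2a^2 + 2b^2$, yields the differential inequality
\[
 \frac{d}{ds}\mathbb{E}[D_s^2] \;\leq\; (2r + 2u^2 z)\,\mathbb{E}[D_s^2] \;+\; 2u^2\,\mathbb{E}\bigl[(\sqrt{Z_s}-\sqrt{z})^2\,(X_s^\delta)^2\bigr].
\]

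It then remains to show that the residual expectation is $O(\delta)$ uniformly in $s\in[t,T]$ and $q_\cdot$. Using $(\sqrt{Z_s}-\sqrt{z})^2 = (Z_s-z)^2/(\sqrt{Z_s}+\sqrt{z})^2 \leq (Z_s-z)^2/z$ (valid since $z>0$ and $Z_s\geq 0$) followed by Cauchy-Schwarz gives
\[
 \mathbb{E}\bigl[(\sqrt{Z_s}-\sqrt{z})^2 (X_s^\delta)^2\bigr] \;\leq\; \frac{1}{z}\bigl(\mathbb{E}[(Z_s-z)^4]\bigr)^{1/2}\bigl(\mathbb{E}[(X_s^\delta)^4]\bigr)^{1/2}.
\]
The $L^4$-moment of $X_s^\delta$ is bounded uniformly in $\delta$ and $q_\cdot$ by the Appendix. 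For the factor $\mathbb{E}[(Z_s-z)^4]$, I would split \eqref{processZ} into its $O(\delta)$ drift and $O(\sqrt\delta)$ martingale parts and apply Jensen and Burkholder--Davis--Gundy termwise, using uniform moment bounds on $Z$ to obtain $\mathbb{E}[(Z_s-z)^4]\leq C\delta^2$, hence a residual of order $\delta$.

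With $\mathbb{E}[D_t^2]=0$, Gronwall's inequality on $[t,T]$ then delivers $\mathbb{E}[D_T^2] \leq C_0\delta$ with $C_0$ depending only on $r,u,z,T$ and the moment constants, hence independent of $\delta$ and of the control $q_\cdot$. The main obstacle is precisely the decomposition choice highlighted above: the naive splitting $\sqrt{Z_s}X_s^\delta-\sqrt{z}X_s^0 = \sqrt{Z_s}D_s + (\sqrt{Z_s}-\sqrt{z})X_s^0$ would instead produce $\mathbb{E}[Z_s D_s^2]$ in the Gronwall coefficient, which is not comparable to $\mathbb{E}[D_s^2]$ and would force a coupled higher-moment iteration; putting $\sqrt{z}$ (rather than $\sqrt{Z_s}$) against $D_s$ avoids this loop entirely.
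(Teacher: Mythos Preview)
Your proposal is correct and follows essentially the same route as the paper: the identical decomposition $\sqrt{Z_s}X_s^\delta - \sqrt{z}X_s^0 = \sqrt{z}D_s + (\sqrt{Z_s}-\sqrt{z})X_s^\delta$, followed by a Gronwall argument on $\mathbb{E}[D_s^2]$ (the paper uses the integral form with the crude bound $(a+b+c)^2\leq 3(a^2+b^2+c^2)$ rather than It\^o on $D_s^2$, but this is cosmetic). Your treatment of the residual term is in fact more explicit than the paper's, which simply asserts $|R(\delta)|\leq C\delta$ by appeal to uniform moments without spelling out the $(\sqrt{Z_s}-\sqrt{z})^2\leq (Z_s-z)^2/z$ and BDG steps.
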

\begin{proof}
See Appendix \ref{appendix_X_cvg}.
\end{proof}

\begin{assumption}
\label{h_assumptions}
We impose more regularity conditions on the terminal function $h$, i.e.,
Lipschitz continuity, differentiability up to the fourth order and
polynomial
growth conditions on the first four derivatives of $h$:
\begin{equation}
\label{terminal_regularity}
\left\{ \begin{array}{lcl}
|h'(x)| \leq K_1,\\
|h''(x)| \leq K_2(1+|x|^m),\\
|h'''(x) |\leq K_3(1+|x|^n),\\
|h^{(4)}(x)| \leq K_4(1+|x|^l),\\
|h(x)-h(y)|\leq K_0|x-y|.\\
\end{array}\right.
\end{equation}
where $K_i$ for $i \in \{0,1,2,3,4 \}$, $m$, $n$ and $l$ are positive constants. 
\end{assumption}


\begin{theorem} 
\label{P_delta_convergence}
Under Assumption \ref{h_assumptions}, 
$P^{\delta}(t,\cdot,\cdot)$ as a family of functions of $x$ and $z$ indexed by $\delta$, uniformly converge to $P_0(t,\cdot,\cdot)$ in $(q_\cdot)$ with rate $\sqrt{\delta}$,
 for all $(t,x,z)\in 
 [0,T]\times \mathbb{R}^+ \times \mathbb{R}^+$.
\end{theorem}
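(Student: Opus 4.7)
The plan is to reduce the pointwise (in $(t,x,z)$) convergence $P^{\delta}\to P_0$ to the $L^2$ estimate already supplied by Proposition \ref{X_cvg}, by exploiting that both quantities are essential suprema of expectations of the \emph{same} Lipschitz payoff applied to $X^{\delta}_T$ and $X^{0}_T$ respectively.

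First I would use the elementary inequality
\begin{equation*}
\bigl|\esssup_{q_\cdot\in[d,u]} f(q_\cdot) - \esssup_{q_\cdot\in[d,u]} g(q_\cdot)\bigr| \;\le\; \esssup_{q_\cdot\in[d,u]} |f(q_\cdot)-g(q_\cdot)|,
\end{equation*}
applied to $f(q_\cdot)=\mathbb{E}_{(t,x,z)}[h(X_T^{\delta})]$ and $g(q_\cdot)=\mathbb{E}_{(t,x,z)}[h(X_T^{0})]$. This gives
\begin{equation*}
|P^{\delta}(t,x,z)-P_0(t,x,z)| \;\le\; e^{-r(T-t)}\,\esssup_{q_\cdot\in[d,u]} \bigl|\mathbb{E}_{(t,x,z)}\bigl[h(X_T^{\delta})-h(X_T^{0})\bigr]\bigr|.
\end{equation*}

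Next, I invoke the Lipschitz bound in Assumption \ref{h_assumptions}, namely $|h(x)-h(y)|\le K_0|x-y|$, followed by Jensen's (or Cauchy--Schwarz's) inequality to pass to an $L^2$ norm:
\begin{equation*}
\bigl|\mathbb{E}_{(t,x,z)}\bigl[h(X_T^{\delta})-h(X_T^{0})\bigr]\bigr| \;\le\; K_0\,\mathbb{E}_{(t,x,z)}\bigl[|X_T^{\delta}-X_T^{0}|\bigr] \;\le\; K_0\,\bigl(\mathbb{E}_{(t,x,z)}[(X_T^{\delta}-X_T^{0})^2]\bigr)^{1/2}.
\end{equation*}
Now Proposition \ref{X_cvg} gives $\mathbb{E}_{(t,x,z)}[(X_T^{\delta}-X_T^{0})^2]\le C_0\delta$ \emph{uniformly in $(q_\cdot)\in[d,u]$}, so the essential supremum above is itself bounded by $K_0\sqrt{C_0\delta}$. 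Combining the two displays yields
\begin{equation*}
|P^{\delta}(t,x,z)-P_0(t,x,z)| \;\le\; e^{-r(T-t)}K_0\sqrt{C_0}\,\sqrt{\delta},
\end{equation*}
which is the required rate. Since the bound is independent of $(t,x,z)$ and of the control process $(q_\cdot)$, the convergence is uniform in the claimed sense.

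The only delicate point is ensuring that the uniformity in $(q_\cdot)$ in Proposition \ref{X_cvg} is preserved after taking the essential supremum; this is where the inequality $|\esssup f-\esssup g|\le \esssup|f-g|$ does exactly the right job, so no further refinement is needed. The rest of the argument is a one-line application of Lipschitz continuity plus Cauchy--Schwarz, and therefore I do not anticipate a substantive obstacle beyond citing the right lemmas.
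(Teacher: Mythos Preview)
Your proposal is correct and follows essentially the same route as the paper: bound the difference of essential suprema by the essential supremum of the difference, apply the Lipschitz constant $K_0$ of $h$, use Cauchy--Schwarz to pass to the $L^2$ norm, and invoke Proposition~\ref{X_cvg}. If anything, your use of the clean inequality $|\esssup f-\esssup g|\le\esssup|f-g|$ is slightly tidier than the paper's intermediate step.
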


\begin{proof}
For $P^{\delta}$ given by \eqref{eq:P_delta} and $P_0$ given by \eqref{eq:P_0}, using the Lipschitz continuous of $h(\cdot)$ and by the Cauchy-Schwartz inequality, we have
\begin{equation*}
\begin{split}
|P^{\delta}-P_0| =& \exp(-r(T-t))|\esssup_{q_{.}\in [d,u]}\mathbb{E}_{(t,x,z)}[h(X_T^{\delta})]-\esssup_{q_{.}\in [d,u]}\mathbb{E}_{(t,x,z)}[h(X_T^{0})]|\\
\leq & \exp(-r(T-t))|\esssup_{q_{.}\in [d,u]}\mathbb{E}_{(t,x,z)}[h(X_T^{\delta})]-\mathbb{E}_{(t,x,z)}[h(X_T^{0})]|\\
\leq & \exp(-r(T-t))\esssup_{q_{.}\in [d,u]}|\mathbb{E}_{(t,x,z)}[h(X_T^{\delta})-h(X_T^{0})] |\\
\leq & K_0 \exp(-r(T-t))\esssup_{q_{.}\in [d,u]}\mathbb{E}_{(t,x,z)}|X_T^{\delta}- X_T^{0} |\\
\leq & K_0 \exp(-r(T-t))\esssup_{q_{.}\in [d,u]}[\mathbb{E}_{(t,x,z)}(X_T^{\delta}- X_T^{0} )^2]^{1/2}.
\end{split}
\end{equation*}
Therefore, by Proposition \ref{X_cvg}, we have  
\begin{equation*}
|P^{\delta}-P_0|\leq C_1\sqrt{\delta}
\end{equation*}
where $C_1$ is a positive constant independent of $\delta$, as desired.
\end{proof}

\subsection{Pricing Nonlinear PDEs}
\label{main_result}
We now derive $P_0$ and $P_1$, the leading order term and the first correction for the approximation of the worst-case scenario price $P^{\delta}$,
which
is the solution to the HJB equation associated to the corresponding control problem given by  the generalized BSB nonlinear equation:
\begin{equation}
\label{BSB}
\begin{split}
\partial_t P^{\delta}+r(x\partial_x P^{\delta}-P^{\delta})+\sup_{q\in [d,u]}\{\frac{1}{2}q^2zx^2\partial_{xx}^2 P^{\delta}+\sqrt{\delta}(q \rho z x\partial_{xz}^2 P^{\delta})\}\\
+\delta(\frac{1}{2} z\partial_{zz}^2 P^{\delta}+\kappa(\theta-z)\partial_z
P^{\delta})=0,
\end{split}
\end{equation}
with terminal condition $P^{\delta}(T,x,z)=h(x)$. For simplicity and without loss of generality, $r = 0$ is assumed for
the rest of paper.

In this section, we use the regular perturbation approach to formally expand the value function $P^{\delta}(t,x,z)$ as follows:
\begin{equation}
\label{eq:expansion} 
P^{\delta}=P_0+\sqrt{\delta}P_1+\delta P_2+\cdots.
\end{equation}
Inserting this expansion into the main BSB equation \eqref{BSB}, by Theorem \ref{P_delta_convergence}, the leading order term $P_0$ is the solution to
\begin{equation}
\begin{split}
\label{P0}
\partial_t P_0+\sup_{q\in [d,u]}\{\frac{1}{2}q^2zx^2\partial_{xx}^2
P_0\}&=0,\\
P_0(T,x,z)&=h(x).
\end{split}
\end{equation}
In this case, $z$ is just a positive parameter, we can achieve the existence and uniqueness of a smooth solution to \eqref{P0}, which is referred in the classical work of \cite{friedman1975stochastic} and \cite{pham2009continuous}.

\subsubsection{Convergence of $\partial_{xx}^2P^{\delta}$}
The main references on the regularity for uniformly parabolic equations are \cite{crandall2000lp}, \cite{wang1992regularity1} and  \cite{wang1992regularity2}. In order to use these results, we have to make a log transformation to change variable $x$ to $\ln x$. Then because $q_t$ is bounded away from 0 in $\mathcal{A}$,  \eqref{BSB} is uniformly parabolic. 
Note that given $h$, which satisfies Assumption \ref{h_assumptions}, it is known that $P_0$ belongs to $C_{p}^{1,2}$ ($p$ for polynomial growth).  We conjecture that the result can be extended to $P^{\delta}$ for $\delta$ fixed. Since a full proof is beyond the scope of this paper, here we just assume this property. 

\begin{assumption}
\label{Pdelta_assumptions}
Throughout the paper, we make the following assumptions on $P^{\delta}(t,\cdot,\cdot)$: \\
(i) $P^{\delta}(t,\cdot,\cdot)$ belongs to $C_{p}^{1,2,2 }$ ($p$ for polynomial growth), for $\delta$ fixed.\\
(ii) $\partial_{x}P^{\delta}(t,\cdot,\cdot)$ and $\partial_{xx}^2P^{\delta}(t,\cdot,\cdot)$ are uniformly bounded in $\delta$.
\end{assumption}

Then under this assumption, we have the following Proposition:

\begin{proposition}
\label{second_derivative_cvg}
Under Assumptions \ref{h_assumptions} and \ref{Pdelta_assumptions}, the family $\partial_{xx}^2P^{\delta}(t,\cdot,\cdot)$ of functions of $x$ and $z$ indexed by $\delta$, converges to $\partial_{xx}^2P_0(t,\cdot,\cdot)$  as $\delta$ tends to $0$ with rate $\sqrt{\delta}$, uniformly  on compact sets in $x$ and $z$, for $t \in [0,T)$.
\end{proposition}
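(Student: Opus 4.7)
The plan is to exploit the fact that $P_0$ is an approximate classical solution of the full BSB equation~\eqref{BSB} with residual of order $\sqrt{\delta}$, and then to transfer the $L^\infty$-convergence rate of Theorem~\ref{P_delta_convergence} to the Hessian through interior regularity estimates for uniformly parabolic equations. First I would substitute $P_0$ into the left-hand side of~\eqref{BSB}: since $P_0$ solves~\eqref{P0}, the block $\partial_t P_0+\sup_q\{\tfrac12 q^2 zx^2\partial^2_{xx}P_0\}$ vanishes and what remains is $\sup_q\{\tfrac12 q^2 zx^2\partial^2_{xx}P_0+\sqrt{\delta}q\rho zx\partial^2_{xz}P_0\}-\sup_q\{\tfrac12 q^2 zx^2\partial^2_{xx}P_0\}$ together with $\delta\bigl(\tfrac12 z\partial^2_{zz}P_0+\kappa(\theta-z)\partial_z P_0\bigr)$. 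Differentiating~\eqref{P0} once in $z$, which enters only as a parameter and hence inherits the classical regularity, shows that $\partial^2_{xz}P_0$, $\partial_z P_0$ and $\partial^2_{zz}P_0$ are bounded on compact sets, so this residual is $O(\sqrt{\delta})$.

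Next I would linearize by an envelope-theorem selection. Choose a measurable $\bar q(t,x,z)\in[d,u]$ attaining the sup in~\eqref{BSB} for $P^\delta$; then $P^\delta$ satisfies a \emph{linear} parabolic PDE with bounded (but only measurable) coefficients. Because $\sup_q\{\cdots\}$ dominates evaluation at any fixed $q$, inserting $P_0$ into the same linear operator produces a subsolution-type inequality whose right-hand side is the $\sqrt{\delta}$-residual just computed. Reversing the roles of $P^\delta$ and $P_0$ with the optimizer $\bar q_0$ of~\eqref{P0}, one sandwiches $W^\delta:=P^\delta-P_0$ between sub- and super-solutions of two Pucci-type linear operators with sources of order $\sqrt{\delta}$ and with $L^\infty$-norm $O(\sqrt{\delta})$ by Theorem~\ref{P_delta_convergence}.

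After the authors' log-transform $y=\ln x$, these linear operators are uniformly parabolic in $y$ with ellipticity bounded below by $\tfrac12 d^2 z$ on any compact $z$-slab. Combining the sup-norm bound with the interior $W^{2,p}$ (or $C^{2,\alpha}$) estimates of \cite{crandall2000lp,wang1992regularity1,wang1992regularity2} then yields $\|\partial^2_{yy}W^\delta\|_{L^\infty(K)}=O(\sqrt{\delta})$ on any compact $K\subset[0,T)\times\mathbb{R}^+\times\mathbb{R}^+$; reverting via $x^2\partial^2_{xx}=\partial^2_{yy}-\partial_y$ and using Assumption~\ref{Pdelta_assumptions}(ii) to handle the gradient contribution completes the argument.

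The main obstacle is the degeneracy in the $z$-direction as $\delta\to 0$: the $Z$-diffusion carries a prefactor $\delta$, so the equation is \emph{not} uniformly parabolic in $(y,z)$ uniformly in the small parameter and one cannot invoke two-dimensional interior estimates directly. The resolution is to treat all $z$-derivatives as forcing -- the $\sqrt{\delta}$ cross term $q\rho zx\partial^2_{xz}$ and the full $\delta$-operator in $z$ are absorbed into a source of order $\sqrt{\delta}$ -- and to apply interior parabolic regularity only in the $y$-direction. A secondary subtlety is that the argmax selections $\bar q,\bar q_0$ are generically discontinuous in $(t,x,z)$, which forces the use of the $L^p$-theory of \cite{crandall2000lp} rather than Schauder estimates.
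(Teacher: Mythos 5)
The paper's own proof of this Proposition is a one-line citation: combining the $L^\infty$-rate of Theorem~\ref{P_delta_convergence} with the uniform bound on $\partial_{xx}^2 P^{\delta}$ from Assumption~\ref{Pdelta_assumptions}(ii), it appeals to Theorem~5.2.5 of \cite{giga2010nonlinear}, a compactness/interpolation-type result for families of functions with bounded higher-order derivatives. Your proposal is a genuinely different route --- an honest PDE argument that linearizes the BSB operator via measurable argmax selections, sandwiches $W^{\delta}:=P^{\delta}-P_0$ between Pucci-type inequalities with $O(\sqrt{\delta})$ sources, log-transforms, and applies interior parabolic regularity in the $y=\ln x$ direction only. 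That structure is in the spirit of what one would do to actually \emph{establish} an estimate like Assumption~\ref{Pdelta_assumptions}, and it is more informative than the paper's citation; unfortunately, under the stated hypotheses it does not close.

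The central gap is in the step ``treat all $z$-derivatives as forcing.'' After moving $\sqrt{\delta}\,\bar q\rho z x\,\partial_{xz}^2$ and $\delta\mathcal{L}_{CIR}$ to the right-hand side, the source contains $\sqrt{\delta}\,\bar q\rho z x\,\partial_{xz}^2 W^{\delta}$ and $\delta\bigl(\tfrac12 z\partial_{zz}^2 W^{\delta}+\kappa(\theta-z)\partial_z W^{\delta}\bigr)$. These are $O(\sqrt{\delta})$ only if $\partial_{xz}^2 P^{\delta}$, $\partial_z P^{\delta}$, and $\partial_{zz}^2 P^{\delta}$ are uniformly bounded in $\delta$. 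Assumption~\ref{Pdelta_assumptions}(ii) gives uniform bounds only for $\partial_x P^{\delta}$ and $\partial_{xx}^2 P^{\delta}$; part~(i) guarantees the $z$-derivatives exist only \emph{for fixed} $\delta$. Without that control the source is not $o(1)$, let alone $O(\sqrt{\delta})$, and the interior estimate yields nothing beyond boundedness of $\partial_{yy}^2 W^{\delta}$. A second, related problem: the $W^{2,p}$ theory of \cite{crandall2000lp} returns $L^p_{\mathrm{loc}}$ control of the Hessian, whereas the Proposition claims uniform-on-compacts convergence of $\partial_{xx}^2 W^{\delta}$; passing from $W^{2,p}$ to pointwise control of second derivatives needs $W^{3,p}$ or Schauder estimates, and the latter require H\"older coefficients --- exactly the regularity the bang-bang argmax selections deny, as you yourself flag. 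Finally, the residual term $\mathcal{L}_0(\bar q^{\delta})P_0$ is an $O(1)$ quantity on the set where $\bar q^{\delta}\neq\bar q^{0}$, and its $L^p$-smallness would need the localization of that set from Assumption~\ref{zero_points}, which is not among the hypotheses of this Proposition. In short, the strategy is sound in outline but requires upgraded assumptions (uniform-in-$\delta$ bounds on mixed and $z$-derivatives of $P^{\delta}$, and continuity of the optimizers or a substitute for Schauder) that the paper does not supply here.
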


\begin{proof}
Under Assumptions  \ref{h_assumptions} and \ref{Pdelta_assumptions}, and by Theorem \ref{P_delta_convergence}, the Proposition can be obtained by following the arguments in Theorem 5.2.5 of \cite{giga2010nonlinear}.
\end{proof}

Denote the zero sets 
of $ \partial_{xx}^2P_0$ as   
\begin{align*}
S_{t,z}^{0}:=\{x=x(t,z) \in \mathbb{R}^+| \partial_{xx}^2P_0(t,x,z)=0\}.
\end{align*}
Define the set where $\partial_{xx}^2P^{\delta}$ and $ \partial_{xx}^2P_0$ take different signs as
\begin{equation}
\label{A_delta_set}
\begin{split}
A_{t,z}^{\delta}:=&\{x=x(t,z)|\partial_{xx}^2P^{\delta}(t,x,z)> 0, \partial_{xx}^2P_0(t,x,z) < 0 \}\\
& \cup \{x=x(t,z)|\partial_{xx}^2P^{\delta}(t,x,z)< 0, \partial_{xx}^2P_0(t,x,z) > 0 \}.
\end{split}
\end{equation}  

\begin{assumption}
\label{zero_points}
We make the following assumptions:\\
(i) There is a finite number of zero points of $\partial_{xx}^2P_0$, for any $t \in [0,T]$. That is, $S_{t,z}^{0}=\{x_1< x_2 < \cdots < x_{m(t)}\}$ and $\max\limits_{0\leq t\leq T}m(t)\leq n$.\\
(ii) There exists a constant $C$ such that the set $A_{t,z}^{\delta}$ defined in \eqref{A_delta_set} is included in $ \cup_{i=1}^n I_i^{\delta}$, where 
  \begin{equation*}
  I_i^{\delta}:=[x_i-C\sqrt{\delta}, x_i+C\sqrt{\delta}], \;\;\;\; \text{for } x_i \in S_{t,z}^{0}\text{ and } 1\leq i\leq m(t).
  \end{equation*}

\end{assumption}

\begin{remark}
\label{remark:thrid_derivative}
Here we explain the rationale for Assumption \ref{zero_points} (ii). 

Suppose $P_0$ has a third derivative with respect to $x$, which does not vanish on the set $S_{t,z}^{0}$. By Proposition \ref{second_derivative_cvg},  $\partial_{xx}^2P^{\delta}(t,\cdot,\cdot)$ converges to $\partial_{xx}^2P_0(t,\cdot,\cdot)$ with rate $\sqrt{\delta}$, therefore
we conclude that there exists a constant $C$ such that on $( \cup_{i=1}^n I_i^{\delta})^c$, $\partial_{xx}^2P^{\delta}(t,\cdot,\cdot)$ and $\partial_{xx}^2P_0(t,\cdot,\cdot)$ have the same sign, and Assumption \ref{zero_points} (ii) would follow. This is illustrated  in Figure \ref{fig:second_derivative} with an example with two zero points for $\partial_{xx}^2P_0(t,\cdot,\cdot)$.

Otherwise, $I_i^{\delta}$ would have a larger radius of  order $\mathcal{O}(\delta^{\alpha})$ for $\alpha \in (0, \frac{1}{2})$, and then the accuracy in the Main Theorem \ref{main_theorem} would be $\mathcal{O}(\delta^{\alpha+1/2})$, but in any case of order 
$o(\sqrt{\delta})$.
\end{remark}

\subsubsection{Optimizers}
The optimal control in the nonlinear PDE \eqref{P0} for $P_0$,  denoted as
\begin{equation*}
q^{\ast,0}(t,x,z):=\arg\max_{q\in [d,u]}\{\frac{1}{2}q^2zx^2\partial_{xx}^2
P_0\},
\end{equation*}
is given by
\begin{equation}
\label{q_0} 
 q^{\ast,0}(t,x,z)=\left\{
\begin{array}{rcl} u, \partial_{xx}^2P_0\geq 0 \\
d, \partial_{xx}^2P_0<0\end{array}\right..
\end{equation}

The optimizer to the main BSB equation \eqref{BSB} is given in the following lemma:
\begin{lemma}
	\label{q_ast_lemma}
Under Assumption \ref{zero_points}, for $\delta$ sufficiently small and
	for $x \notin S_{t,z}^{0}$, the optimal control in the nonlinear PDE \eqref{BSB} for $P^\delta$,  denoted as 
	\begin{equation*}
	q^{\ast,\delta}(t,x,z):=\arg\max_{q\in [d,u]}\{\frac{1}{2}q^2zx^2\partial_{xx}^2 P^{\delta}+\sqrt{\delta}(q \rho z x\partial_{xz}^2 P^{\delta})\},
	\end{equation*}
	is given by
	\begin{equation}
	\label{q_ast}
	q^{\ast,\delta}(t,x,z)=\left\{
	\begin{array}{rcl} u, \partial_{xx}^2P^{\delta}\geq 0 \\
	d, \partial_{xx}^2P^{\delta}<0\end{array}\right..
	\end{equation}
\end{lemma}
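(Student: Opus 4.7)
The plan is to view the bracketed quantity in the definition of $q^{\ast,\delta}$ as a quadratic in $q$ with coefficients depending on $(t,x,z)$ through $\partial_{xx}^2 P^\delta$ and $\partial_{xz}^2 P^\delta$, and then show that for $\delta$ small the $\sqrt{\delta}$ correction cannot change the maximizer selected by the leading term. Write
\[
\phi^\delta(q) := \tfrac{1}{2}q^2 z x^2 \partial_{xx}^2 P^\delta(t,x,z) + \sqrt{\delta}\, q\, \rho z x\, \partial_{xz}^2 P^\delta(t,x,z) = A q^2 + B q,
\]
with $A := \tfrac{1}{2} z x^2 \partial_{xx}^2 P^\delta$ and $B := \sqrt{\delta}\, \rho z x\, \partial_{xz}^2 P^\delta$. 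Since $z,x>0$, the sign of $A$ matches that of $\partial_{xx}^2 P^\delta$; and by Assumption \ref{Pdelta_assumptions}(i) the mixed partial $\partial_{xz}^2 P^\delta(t,x,z)$ stays bounded as $\delta$ shrinks, so $|B| = O(\sqrt{\delta})$ at the fixed point.

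The key preparatory step is to bound $|A|$ away from $0$ uniformly in small $\delta$. Since $x \notin S_{t,z}^0$, we have $\partial_{xx}^2 P_0(t,x,z) \neq 0$, so by Proposition \ref{second_derivative_cvg} there exists $\delta_0 = \delta_0(t,x,z) > 0$ such that for all $0 < \delta \leq \delta_0$ both $|\partial_{xx}^2 P^\delta(t,x,z)| \geq \tfrac{1}{2}|\partial_{xx}^2 P_0(t,x,z)|$ and $\partial_{xx}^2 P^\delta(t,x,z)$ has the same sign as $\partial_{xx}^2 P_0(t,x,z)$. Consequently $|A|$ is bounded below by a positive constant independent of $\delta$, and the sign of $A$ is fixed throughout this range.

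Now split on the sign of $A$. If $A > 0$ (equivalently $\partial_{xx}^2 P^\delta \geq 0$), then $\phi^\delta$ is strictly convex on $\mathbb{R}$ and its maximum on $[d,u]$ is attained at an endpoint; a short computation gives
\[
\phi^\delta(u) - \phi^\delta(d) = (u-d)\left[\tfrac{1}{2}(u+d)\,z x^2\, \partial_{xx}^2 P^\delta + \sqrt{\delta}\,\rho z x\, \partial_{xz}^2 P^\delta\right],
\]
whose first bracketed term is bounded below by a positive constant while the second is $O(\sqrt{\delta})$, so the difference is positive for $\delta$ small enough and $q^{\ast,\delta} = u$. If $A < 0$ (equivalently $\partial_{xx}^2 P^\delta < 0$), then $\phi^\delta$ is strictly concave with unconstrained maximizer
\[
q^\star = -\frac{B}{2A} = -\frac{\sqrt{\delta}\,\rho\,\partial_{xz}^2 P^\delta}{x\, \partial_{xx}^2 P^\delta} = O(\sqrt{\delta}),
\]
which is strictly less than $d > 0$ for $\delta$ sufficiently small; the constrained maximizer on $[d,u]$ is therefore $q^{\ast,\delta} = d$.

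The argument is elementary quadratic optimization, and the only subtlety is ensuring $|\partial_{xx}^2 P^\delta|$ stays bounded below so that the $O(\sqrt{\delta})$ perturbation is genuinely lower order. That sign stability is precisely what Proposition \ref{second_derivative_cvg} together with the hypothesis $x \notin S_{t,z}^0$ delivers, and I would emphasize that the threshold $\delta_0$ is allowed to depend on the fixed point $(t,x,z)$, consistent with the pointwise nature of the assertion.
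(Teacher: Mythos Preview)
Your proof is correct and follows essentially the same approach as the paper's: both treat the objective as a quadratic $f(q)=Aq^2+Bq$, invoke Proposition \ref{second_derivative_cvg} at $x\notin S_{t,z}^0$ to keep $\partial_{xx}^2P^\delta$ bounded away from zero, and observe that the unconstrained critical point $-B/(2A)=O(\sqrt{\delta})$ falls below $d$ for small $\delta$, forcing the constrained maximizer to the endpoint dictated by the sign of $\partial_{xx}^2P^\delta$. The only cosmetic difference is that in the convex case you compare $\phi^\delta(u)-\phi^\delta(d)$ algebraically, whereas the paper argues geometrically (via a figure) that with the vertex in $[-d,d]$ the farther endpoint $u$ wins; the two arguments are equivalent.
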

\begin{proof}
	To find the optimizer $q^{\ast,\delta}$ to
	$$\sup_{q\in [d,u]}\{\frac{1}{2}q^2zx^2\partial_{xx}^2 P^{\delta}+\sqrt{\delta}(q \rho z x\partial_{xz}^2 P^{\delta})\},$$
	we firstly relax the restriction $q\in [d,u]$ to $q\in \mathbb{R}$.
	
	Denote
	$$f(q):=\frac{1}{2}q^2zx^2\partial_{xx}^2 P^{\delta}+\sqrt{\delta}(q \rho z x\partial_{xz}^2 P^{\delta}).$$
By the result of Proposition \ref{second_derivative_cvg} that $\partial_{xx}^2P^{\delta}$ uniformly converge to $\partial_{xx}^2P_0$ as $\delta$ goes to $0$, 
for $x \notin S_{t,z}^{0}$, the optimizer of $f(q)$ is given by 
	$$\hat{q}^{\ast,\delta} =-\frac{\rho \sqrt{\delta} \partial_{xz}^2P^{\delta}}{x\partial_{xx}^2P^{\delta}}.$$
	
Since $X_t$ and $Z_t$ are strictly positive, the sign of the coefficient of $q^2$ in $f(q)$ is determined by the sign of $\partial_{xx}^2 P^{\delta}$. We have the following cases represented in Figure \ref{fig1}, from which we can see that for $\delta$ sufficiently small such that $|\hat{q}^{\ast,\delta}| \leq d$, the optimizer is given by
$$q^{\ast,\delta}=u\mathbbm{1}_{\{\partial_{xx}^2P^{\delta}\geq 0\}} +d\mathbbm{1}_{\{\partial_{xx}^2P^{\delta}< 0\}}. $$

\begin{figure}[htbp]
  \centering
  \label{fig1}
  \includegraphics[width=5in,height=3.5in]{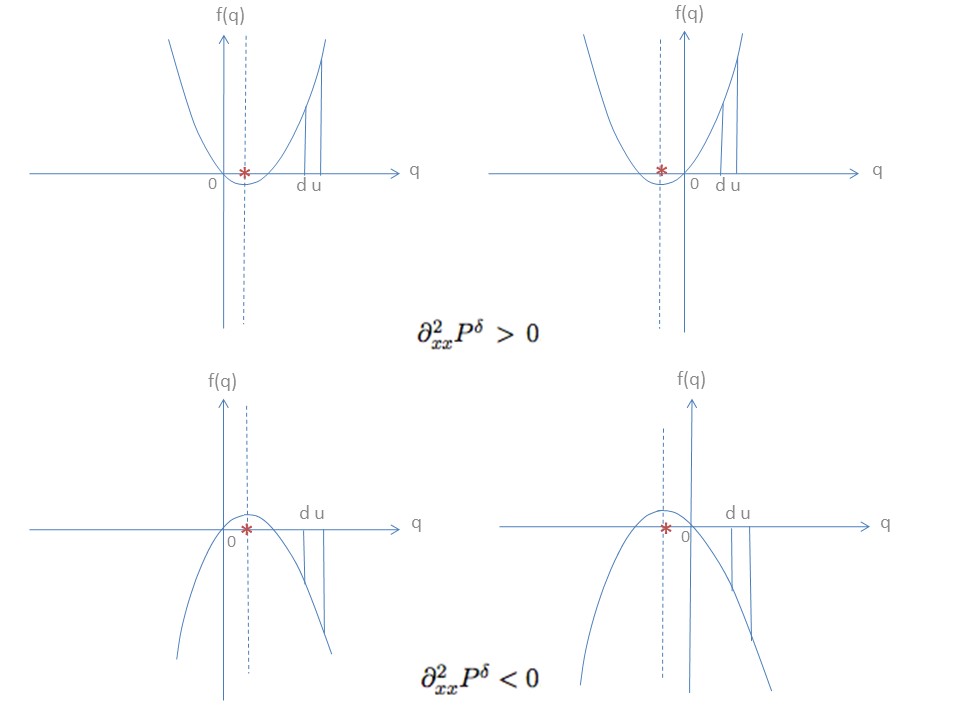}.   
  \caption{\it Illustration of the derivation of $q^{\ast,\delta}$: if $\partial_{xx}^2P^{\delta} \geq 0$, whether $\hat{q}^{\ast,\delta}$ is positive or negative, with the requirement ${q\in [d,u]}$, $q^{\ast,\delta} =u$; otherwise $q^{\ast,\delta} =d$.}
\end{figure}

\end{proof}

\begin{remark}
When $h(\cdot)$ is convex (resp. concave), since supremum and expectation preserves convexity (resp. concavity), one can see that the worst case scenario price 
\begin{equation*}
P^{\delta}(t,x,z)=\exp(-r(T-t))\esssup_{q_{.}\in [d,u]}\mathbb{E}_{(t,x,z)}[h(X_T)],
\end{equation*}
 is convex (resp. concave) with $\partial_{xx}^2P^{\delta}> 0$ (resp. $< 0$), and thus $q^{\ast,\delta}=u$ (resp. $=d$). In these two cases, we are back to perturbations around Black--Scholes prices which have been treated in \cite{fouque2011multiscale}.
In this paper, we work with general terminal payoff functions, neither convex nor concave. 
\end{remark}

Plugging the optimizer $q^{\ast,\delta}$ given by Lemma \ref{q_ast_lemma}, the BSB equation \eqref{BSB} can be rewritten as 
\begin{equation}
\label{BSB_ast}
\begin{split}
\partial_t P^{\delta}+\frac{1}{2}(q^{\ast,\delta})^2zx^2\partial_{xx}^2 P^{\delta}+\sqrt{\delta}( q^{\ast,\delta} \rho z x\partial_{xz}^2 P^{\delta})+\delta(\frac{1}{2} z\partial_{zz}^2 P^{\delta}+\kappa(\theta-z)\partial_z
P^{\delta})=0,
\end{split}
\end{equation}
with terminal condition $P^{\delta}(T,x,z)=h(x)$.

\subsubsection{Heuristic Expansion and Accuracy of the Approximation}
We insert the expansion \eqref{eq:expansion} into the main BSB equation \eqref{BSB_ast} and collect terms in
successive powers of $\sqrt{\delta}$. Under Assumption \ref{zero_points} that $q^{\ast,\delta}\rightarrow q^{\ast,0}$ as $\delta \rightarrow 0$, without loss of accuracy, the first order correction term $P_1$ is chosen as the solution to the linear equation:
\begin{equation}
\begin{split}
\label{P1}
\partial_t P_1+\frac{1}{2}(q^{\ast, 0})^2zx^2\partial_{xx}^2 P_1+q^{\ast, 0}\rho  z x\partial_{xz}^2
P_0&=0,\\
P_1(T,x,z)&=0,
\end{split}
\end{equation}
where $q^{\ast,0}$ is given by \eqref{q_0}.

Since \eqref{P1} is linear, the existence and uniqueness result of a smooth solution $P_1$ can be achieved by firstly change the variable $x \rightarrow \ln x$, and then use the classical result  of  \cite{friedman1975stochastic} for the parabolic equation \eqref{P1} with diffusion coefficient bounded below by $d^2z>0$.

Note that that the source term is proportional to the parameter $\rho$.

We shall show in the following that under additional regularity conditions imposed on the derivatives of $P_0$ and $P_1$, the approximation error $|P^{\delta}-P_0-\sqrt{\delta}P_1|$ is of order $\mathcal{O}(\delta)$.

\begin{assumption}
\label{derivative_assumptions}
We assume polynomial growth for the following derivatives of $P_0$ and $P_1$:
\begin{equation}
\left\{ \begin{array}{ll}
|\partial_{xz}^2P_0(s,x,z)|\leq a_{11} (1+x^{b_{11}}+z^{c_{11}})\\
|\partial_{z}P_0(s,x,z)|\leq a_{01} (1+x^{b_{01}}+z^{c_{01}})\\
|\partial_{xx}^2P_1(s,x,z)|\leq \bar{a}_{20} (1+x^{\bar{b}_{20}}+z^{\bar{c}_{20}})\\
|\partial_{z}P_1(s,x,z)|\leq \bar{a}_{01} (1+x^{\bar{b}_{01}}+z^{\bar{c}_{01}})\\
|\partial_{zz}^2P_1(s,x,z)|\leq \bar{a}_{02} (1+x^{\bar{b}_{02}}+z^{\bar{c}_{02}})
\end{array}\right.
\end{equation}
where $a_i, b_i, c_i, \bar{a}_i, \bar{b}_i, \bar{c}_i$ are positive integers for $i \in (20, 11, 01, 02)$.
\end{assumption}


\begin{theorem}[Main Theorem] 
\label{main_theorem}
Under Assumptions \ref{h_assumptions}, \ref{zero_points} and \ref{derivative_assumptions}, the residual function $E^{\delta}(t,x,z)$ defined by
\begin{equation}
\label{error}
E^{\delta}(t,x,z):= P^{\delta}(t,x,z)-P_0(t,x,z)-\sqrt{\delta}P_1(t,x,z)
\end{equation}
is of order $\mathcal{O}(\delta)$. In other words, $\forall(t,x,z)\in
[0,T]\times \mathbb{R}^+ \times \mathbb{R}^+$, there exists a positive constant
C, such that $|E^{\delta}(t,x,z)|\leq C\delta$, where $C$ may depend on
$(t,x,z)$ but not on $\delta$.
\end{theorem}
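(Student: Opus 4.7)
The plan is to use a probabilistic representation together with Itô's formula to compare $P^{\delta}$ with the candidate expansion $\tilde{P}:=P_0+\sqrt{\delta}P_1$. Because $q^{\ast,\delta}$ in Lemma \ref{q_ast_lemma} attains the supremum in the BSB equation \eqref{BSB_ast}, the verification theorem gives $P^{\delta}(t,x,z)=\mathbb{E}^{q^{\ast,\delta}}_{(t,x,z)}[h(X_T^{\delta})]$ (with $r=0$). Applying Itô to $\tilde{P}(s,X_s^{\delta},Z_s)$ along the controlled dynamics driven by $q^{\ast,\delta}$ from $s=t$ to $s=T$, using the terminal conditions $P_0(T,\cdot,\cdot)=h$, $P_1(T,\cdot,\cdot)=0$, and taking expectations to kill the martingale parts (Assumption \ref{derivative_assumptions} together with the moment bounds of the Appendix guarantee integrability), we obtain the identity
\begin{equation*}
E^{\delta}(t,x,z)\;=\;-\,\mathbb{E}^{q^{\ast,\delta}}_{(t,x,z)}\!\int_{t}^{T}\!\mathcal{L}^{q^{\ast,\delta}}\tilde{P}(s,X_s^{\delta},Z_s)\,ds,
\end{equation*}
where $\mathcal{L}^{q}$ denotes the generator appearing on the left-hand side of \eqref{BSB_ast} with $q$ frozen in place of $q^{\ast,\delta}$.

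Next I would expand $\mathcal{L}^{q^{\ast,\delta}}\tilde{P}$ using the PDEs \eqref{P0} and \eqref{P1} to cancel $\partial_t P_0$ and $\partial_t P_1$. The residual drift splits into two structurally different pieces: a \emph{regular piece} of order $\delta$ and $\delta^{3/2}$ collecting the terms $\delta[q^{\ast,\delta}\rho z x\partial_{xz}^{2}P_1+\tfrac{1}{2}z\partial_{zz}^{2}P_0+\kappa(\theta-z)\partial_z P_0]+\delta^{3/2}[\tfrac{1}{2}z\partial_{zz}^{2}P_1+\kappa(\theta-z)\partial_z P_1]$, and a \emph{control-mismatch piece} proportional to $(q^{\ast,\delta})^{2}-(q^{\ast,0})^{2}$ and $q^{\ast,\delta}-q^{\ast,0}$, which vanishes outside $A^{\delta}_{s,Z_s}$ by Lemma \ref{q_ast_lemma}. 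The regular piece is dispatched by Assumption \ref{derivative_assumptions} combined with the uniform-in-$\delta$ polynomial moment bounds of $X^{\delta}$ and $Z$ established in the Appendix: each term is bounded by $\delta$ times an integrable majorant, yielding an $\mathcal{O}(\delta)$ contribution after integration in $s$ and expectation.

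The main obstacle is the control-mismatch piece, supported on the random set $\{X_s^{\delta}\in A^{\delta}_{s,Z_s}\}$. The key observation (made heuristically in Remark \ref{remark:thrid_derivative}) is that on $A^{\delta}_{s,Z_s}\subset\cup_i I_i^{\delta}$ the argument $X_s^{\delta}$ sits within a $C\sqrt{\delta}$-neighborhood of a zero $x_i(s,Z_s)$ of $\partial_{xx}^{2}P_0$, so by a first-order Taylor expansion in $x$,
\begin{equation*}
|\partial_{xx}^{2}P_0(s,X_s^{\delta},Z_s)|\;\le\;\|\partial_{xxx}^{3}P_0\|_{\infty,\mathrm{loc}}\,|X_s^{\delta}-x_i(s,Z_s)|\;\le\;C'\sqrt{\delta}
\end{equation*}
on this event. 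Combining with $|(q^{\ast,\delta})^{2}-(q^{\ast,0})^{2}|\le u^{2}-d^{2}$ and the polynomial bound on $\partial_{xz}^{2}P_0$, the control-mismatch integrand is pointwise bounded by $C''\sqrt{\delta}\cdot\mathbbm{1}_{\{X_s^{\delta}\in A^{\delta}_{s,Z_s}\}}\cdot\pi(X_s^{\delta},Z_s)$ for some polynomial $\pi$. Assumption \ref{zero_points}(ii) gives the Lebesgue measure of $A^{\delta}_{s,z}$ as $\mathcal{O}(\sqrt{\delta})$, and the law of $X_s^{\delta}$ (from the uniformly parabolic log-transformed equation used in Section \ref{main_result}) admits a locally bounded density, so $\mathbb{P}(X_s^{\delta}\in A^{\delta}_{s,Z_s})\le C'''\sqrt{\delta}$ uniformly in $s\in[t,T]$. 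Putting the pointwise $\mathcal{O}(\sqrt{\delta})$ bound together with this $\mathcal{O}(\sqrt{\delta})$ probability, and using Cauchy--Schwarz together with the uniform polynomial moments of $(X_s^{\delta},Z_s)$ to handle the polynomial factor, yields an $\mathcal{O}(\delta)$ contribution.

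Assembling the two estimates and integrating over $s\in[t,T]$ gives $|E^{\delta}(t,x,z)|\le C(t,x,z)\,\delta$, completing the proof. I expect the delicate step to be the density/measure bookkeeping on $A^{\delta}_{s,Z_s}$, since the set is both random (through $Z_s$) and dependent on qualitative behavior of $\partial_{xx}^{2}P_0$ at its zeros; if a density bound on $X_s^{\delta}$ is to be avoided, one can alternatively push the pointwise bound a bit further by exploiting that $|\partial_{xx}^{2}P_0|\le C'\sqrt{\delta}$ on $A^{\delta}_{s,Z_s}$ already provides a uniform-in-$\delta$ dominating constant $\sqrt{\delta}$, so that even a crude Chebyshev-type bound on the event probability, combined with the $\sqrt{\delta}$-thickness assumed in Assumption \ref{zero_points}(ii), suffices to close the argument.
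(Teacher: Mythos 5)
Your overall skeleton matches the paper's: both start from a probabilistic (Feynman--Kac / It\^o) representation of $E^\delta$ along the optimally controlled dynamics, subtract the PDEs \eqref{L_equation_0} and \eqref{L_equation_1} to kill the $\partial_t$-terms, split what remains into a regular piece of order $\delta$ and a control-mismatch piece supported on $A^{\delta}_{s,Z_s}$, and then exploit the $\sqrt{\delta}$ smallness of $\partial_{xx}^2 P_0$ on that set. The regular-piece estimate via Assumption \ref{derivative_assumptions} and the uniform moment bounds is the same in both.

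The critical step, and where your argument diverges from the paper, is how you control the occupation measure of the random set $\{X_s^{\ast,\delta}\in A^{\delta}_{s,Z_s}\}$. The paper observes that the integrand naturally carries a factor $\sigma_s^2=(q^{\ast,\delta})^2 Z_s(X_s^{\ast,\delta})^2$, performs the Dambis--Dubins--Schwarz time change $X^{\ast,\delta}_{\tau(v)}=B_v$, and reduces the whole estimate to the expected occupation time of a standard Brownian motion in an interval of width $\mathcal{O}(\sqrt{\delta})$, $\int_t^{\tau^{-1}(T)}\mathbbm{1}_{\{|B_v-x_i|<C\sqrt{\delta}\}}\,dv$, which is then bounded using the explicit Gaussian density and a tail estimate on $\sup Z_s$. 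This sidesteps all issues of density regularity for the two-dimensional process $(X^{\ast,\delta},Z)$. You instead invoke a locally bounded density for the law of $X_s^{\delta}$, claiming $\mathbb{P}(X_s^{\delta}\in A^{\delta}_{s,Z_s})\le C\sqrt{\delta}$ \emph{uniformly} in $s\in[t,T]$. That uniformity cannot hold: as $s\downarrow t$ the law of $X_s^\delta$ collapses to a Dirac mass at $x$, and if $x$ happens to lie in (or near) one of the intervals $I_i^\delta$, the probability tends to $1$, not to $\mathcal{O}(\sqrt{\delta})$. The correct statement would be a density bound of the form $C/\sqrt{s-t}$ (integrable in $s$), but proving even that is nontrivial here: $A^{\delta}_{s,Z_s}$ is random through $Z_s$, which is correlated with the noise driving $X^\delta$, so one needs a joint density or conditional-density estimate for the degenerate (in $\delta$) two-dimensional diffusion $(X^{\ast,\delta},Z)$, which the paper never needs to establish. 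Your fallback --- ``a crude Chebyshev-type bound on the event probability'' --- does not close the gap either: a probability bounded only by $1$, combined with the pointwise $\sqrt{\delta}$ bound on $\partial_{xx}^2 P_0$, gives only $\mathcal{O}(\sqrt{\delta})$, not the claimed $\mathcal{O}(\delta)$. You need \emph{both} the pointwise $\sqrt{\delta}$ bound \emph{and} an $\mathcal{O}(\sqrt{\delta})$ occupation-measure estimate, and the second one is exactly what the time-change argument supplies and what your sketch leaves unproved. So the gap is genuine: the heart of the theorem is the occupation-time bound, and your proposal asserts it rather than proving it.
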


Recall that a function $f^{\delta}(t,x,z)$ is of order $\mathcal{O}(\delta^k)$, denoted $f^{\delta}(t,x,z)\sim \mathcal{O}(\delta^k)$, for $\forall(t,x,z)\in
[0,T]\times \mathbb{R}^+ \times \mathbb{R}^+$, there exists a positive constant $C$ depend on
$(t,x,z)$ but not on $\delta$, such that 
$$|f^{\delta}(t,x,z)|\leq C\delta^k.$$
Similarly, we denote $f^{\delta}(t,x,z)\sim o(\delta^k)$, if
$$\limsup_{\delta \rightarrow 0} | f^{\delta}(t,x,z) |/\delta^k=0.$$

\section{Proof of the Main Theorem \ref{main_theorem}}
\label{sec:proof_main_thm}
Define the following operator
\begin{equation}
\begin{split}
\mathcal{L}^{\delta}(q):&=\partial_t +\frac{1}{2}q^2zx^2\partial_{xx}^2 +\sqrt{\delta} q\rho  z x\partial_{xz}^2 +\delta(\frac{1}{2} z\partial_{zz}^2 +\kappa(\theta-z)\partial_z)\\
&=\mathcal{L}_0(q)+\sqrt{\delta}\mathcal{L}_1(q)+\delta\mathcal{L}_2,
\end{split}
\end{equation} 
where the operators $\mathcal{L}_0(q)$, $\mathcal{L}_1(q)$, and $\mathcal{L}_2$ are defined by:
$$\mathcal{L}_0(q):=\partial_t +\frac{1}{2}q^2zx^2\partial_{xx}^2,$$
$$\mathcal{L}_1(q):=q \rho  z x\partial_{xz}^2,$$
$$\mathcal{L}_2:=\frac{1}{2} z\partial_{zz}^2 +\kappa(\theta-z)\partial_z.$$
Note that:

\noindent $\bullet$ $\mathcal{L}_0(q)$ contains the time derivative and is the Black--Scholes operator  $\mathcal{L}_{BS}(q\sqrt{z})$.

\noindent $\bullet$ $\mathcal{L}_1(q)$ contains the mixed derivative due to the covariation between $X$ and $Z$.

\noindent $\bullet$ $\delta\mathcal{L}_2$ is the infinitesimal generator of the process $Z$, also denoted by $\delta \mathcal{L}_{CIR}$.

The main equation \eqref{BSB_ast} can be rewritten as 
\begin{equation}
\label{L_equation_delta}
\begin{split}
\mathcal{L}^{\delta}(q^{\ast,\delta})P^{\delta}&=0,\\
P^{\delta}(t,x,z)&=h(x).
\end{split}
\end{equation}
Equation \eqref{P0} becomes
\begin{equation}
\label{L_equation_0}
\begin{split}
\mathcal{L}_{BS}(q^{\ast,0}) P_0&=0,\\
P_0(T,x,z)&=h(x).
\end{split}
\end{equation}
Equation \eqref{P1} becomes
\begin{equation}
\label{L_equation_1}
\begin{split}
\mathcal{L}_{BS}(q^{\ast,0}) P_1+\mathcal{L}_1(q^{\ast,0})P_0&=0,\\
P_1(T,x,z)&=0.
\end{split}
\end{equation}


Applying the operator $\mathcal{L}^{\delta}(q^{\ast,\delta})$ to the error term, it follows that
\begin{align*}
\mathcal{L}^{\delta}(q^{\ast,\delta})E^{\delta} =&\mathcal{L}^{\delta}(q^{\ast,\delta})(P^{\delta}-P_0-\sqrt{\delta}P_1)\\
=&\underbrace{\mathcal{L}^{\delta}(q^{\ast,\delta})P^{\delta}}_{\text{=0, eq. \eqref{L_equation_delta}.}}-\mathcal{L}^{\delta}(q^{\ast,\delta})(P_0+\sqrt{\delta}P_1)\\
=&-\left (\mathcal{L}_{BS}(q^{\ast,\delta})+\sqrt{\delta}\mathcal{L}_1(q^{\ast,\delta})+\delta\mathcal{L}_{CIR} \right)(P_0+\sqrt{\delta}P_1)\\
=&-\mathcal{L}_{BS}(q^{\ast,\delta})P_0-\sqrt{\delta} \left [\mathcal{L}_1(q^{\ast,\delta})P_0+\mathcal{L}_{BS}(q^{\ast,\delta})P_1 \right ]-{\delta}\left [\mathcal{L}_1(q^{\ast,\delta})P_1+\mathcal{L}_{CIR}P_0 \right ]\\
&-{\delta}^{\frac{3}{2}} \left[\mathcal{L}_{CIR}P_1 \right ]\\
=&-\underbrace{\mathcal{L}_{BS}(q^{\ast,0}) P_0}_{\text{=0, eq. \eqref{L_equation_0}.}}-(\mathcal{L}_{BS}(q^{\ast,\delta})-\mathcal{L}_{BS}(q^{\ast,0}))P_0-\sqrt{\delta} \Bigg [\underbrace{\mathcal{L}_1(q^{\ast,0})P_0+\mathcal{L}_{BS}(q^{\ast,0}) P_1}_{\text{=0, eq. \eqref{L_equation_1}.}}\\
&+(\mathcal{L}_1(q^{\ast,\delta})-\mathcal{L}_1(q^{\ast,0}))P_0+( \mathcal{L}_{BS}(q^{\ast,\delta}) -\mathcal{L}_{BS}(q^{\ast,0}) )P_1 \Bigg ]\\
&-{\delta} \left[\mathcal{L}_1(q^{\ast,\delta})P_1+\mathcal{L}_{CIR}P_0 \right]-{\delta}^{\frac{3}{2}}(\mathcal{L}_{CIR}P_1 )\\
=& -(\mathcal{L}_{BS}(q^{\ast,\delta})-\mathcal{L}_{BS}(q^{\ast,0}))P_0
-{\sqrt{\delta}} \Bigg [ (\mathcal{L}_1(q^{\ast,\delta})-\mathcal{L}_1(q^{\ast,0}))P_0\\
&+( \mathcal{L}_{BS}(q^{\ast,\delta}) -\mathcal{L}_{BS}(q^{\ast,0}) )P_1 \Bigg ]-{\delta}\left [\mathcal{L}_1(q^{\ast,\delta})P_1+\mathcal{L}_{CIR}P_0 \right ]-{{\delta}^{\frac{3}{2}}}(\mathcal{L}_{CIR}P_1)\\
=&-\frac{1}{2}[(q^{\ast,\delta})^2 -(q^{\ast,0})^2 ]zx^2\partial_{xx}^2 P_0\\
&-{\sqrt{\delta}} \left[\rho  (q^{\ast,\delta}-q^{\ast,0}) z x\partial_{xz}^2P_0+\frac{1}{2}\left((q^{\ast,\delta})^2 -(q^{\ast,0})^2 \right ) zx^2\partial_{xx}^2 P_1 \right]\\
&-{\delta} \left[\rho (q^{\ast,\delta}) z x\partial_{xz}^2P_1+\frac{1}{2} z\partial_{zz}^2P_0 +\kappa(\theta-z)\partial_z P_0 \right]\\
&-{{\delta}^{\frac{3}{2}}} \left[\frac{1}{2} z\partial_{zz}^2P_1 +\kappa(\theta-z)\partial_zP_1 \right],
\end{align*}
where $q^{\ast,0}$ and $q^{\ast,\delta}$ are given in  \eqref{q_0} and  \eqref{q_ast} respectively.

The terminal condition of $E^{\delta}$ is given by
$$E^{\delta}(T,x,z)=P^{\delta}(T,x,z)-P_0(T,x,z)-\sqrt{\delta}P_1(T,x,z)=0.$$ 

\subsection{Feynman--Kac representation of the error term}
For $\delta$ sufficiently small, the optimal choice $q^{\ast,\delta}$ to the main BSB equation  \eqref{BSB} is given explicitly in Lemma \ref{q_ast_lemma}. Correspondingly, the asset price in the worst case scenario is a stochastic process which satisfies the SDE \eqref{processX} with $(q_t)=(q^{\ast,\delta})$ and $r=0$, i.e.,
\begin{equation}
\label{processX_ast}
dX_t^{\ast,\delta}=q^{\ast,\delta}\sqrt{Z_t} X_t^{\ast,\delta}dW_t.
\end{equation} 
Given the existence and uniqueness result of $X_t^{\ast, \delta}$ proved in Appendix  \ref{Existence_and_uniqueness}, we have the following probabilistic representation of $E^{\delta}(t,x,z)$ by Feynman--Kac formula:
\begin{equation*}
E^{\delta}(t,x,z)=\rom{1}_0+\delta^{\frac{1}{2}}\rom{1}_1+\delta\rom{1}_2+\delta^{\frac{3}{2}}\rom{1}_3,
\end{equation*}
where 

\begin{align*}
\rom{1}_0:=\mathbb{E}_{(t,x,z)} \Bigg [&\int_t^T \frac{1}{2}\left ((q^{\ast,\delta})^2 -(q^{\ast,0})^2 \right )Z_s(X_s^{\ast, \delta})^2\partial_{xx}^2 P_0(s, X_s^{\ast, \delta},Z_s)  ds \Bigg ],& \\
\rom{1}_1:=\mathbb{E}_{(t,x,z)} \Bigg [&\int_t^T \Bigg( (q^{\ast,\delta}-q^{\ast,0})\rho Z_s X_s^{\ast, \delta}\partial_{xz}^2P_0(s, X_s^{\ast, \delta},Z_s)\\
&+\frac{1}{2}\left((q^{\ast,\delta})^2 -(q^{\ast,0})^2 \right ) Z_s(X_s^{\ast, \delta})^2\partial_{xx}^2 P_1(s, X_s^{\ast, \delta},Z_s) \Bigg) ds
 \Bigg ],\\
\rom{1}_2:=\mathbb{E}_{(t,x,z)} \Bigg [&\int_t^T \Bigg( q^{\ast, \delta} \rho Z_s X_s^{\ast, \delta}\partial_{xz}^2P_1(s, X_s^{\ast, \delta},Z_s)
+\frac{1}{2} Z_s\partial_{zz}^2P_0(s, X_s^{\ast, \delta},Z_s) \\
&+\kappa(\theta-Z_s)\partial_z P_0(s, X_s^{\ast, \delta},Z_s) \Bigg)ds \Bigg ],\\
\rom{1}_3:=\mathbb{E}_{(t,x,z)}\Bigg [&\int_t^T \Bigg( \frac{1}{2} Z_s\partial_{zz}^2P_1(s, X_s^{\ast, \delta},Z_s) +\kappa(\theta-Z_s)\partial_zP_1(s, X_s^{\ast, \delta},Z_s)\Bigg) ds \Bigg ].
\end{align*}

Note that for $q^{\ast,0}$ given in \eqref{q_0} and $q^{\ast,\delta}$ given in \eqref{q_ast} , we have
\begin{equation}
\label{difference}
q^{\ast,\delta} -q^{\ast,0}= (u-d)( \mathbbm{1}_{\{\partial_{xx}^2P^{\delta}\geq 0\}}-\mathbbm{1}_{\{\partial_{xx}^2P_0\geq 0\}}  ), 
\end{equation}
and
\begin{equation}
\label{square_difference}
(q^{\ast,\delta})^2 -(q^{\ast,0})^2= (u^2-d^2)( \mathbbm{1}_{\{\partial_{xx}^2P^{\delta}\geq 0\}}-\mathbbm{1}_{\{\partial_{xx}^2P_0\geq 0\}}  ) .
\end{equation}  
Also note that $\{q^{\ast,\delta} \neq q^{\ast,0}\}=A_{t,z}^{\delta}$  defined in \eqref{A_delta_set}.


In order to show that $E^{\delta}$ is of order $\mathcal{O}(\delta)$, it suffices to show that $\rom{1}_0$ is of order $\mathcal{O}(\delta)$, $\rom{1}_1$ is of order $\mathcal{O}(\sqrt{\delta})$, and $\rom{1}_2$ and $\rom{1}_3$ are uniformly bounded in $\delta$. Clearly, $\rom{1}_0$ is the main term that directly determines the order of the error term $E^{\delta}$. 

\subsection{Control of the term $\rom{1}_0$}
In this section, we are going to handle the dependence in $\delta$ of the process $X^{\ast, \delta}$ by a time-change argument.
\begin{theorem}
\label{Control_I0}
Under Assumptions \ref{h_assumptions}, \ref{Pdelta_assumptions} and \ref{zero_points}, there exists a positive constant $M_0$, such that 
\begin{equation*}
|\rom{1}_0|\leq M_0\delta
\end{equation*}
where $M_0$ may depend on
$(t,x,z)$ but not on $\delta$.
That is, $\rom{1}_0$ is of order $\mathcal{O}(\delta)$.
\end{theorem}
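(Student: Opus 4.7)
The plan is to exploit two independent sources of smallness in the integrand of $\rom{1}_0$: on the support of its prefactor the value of $\partial_{xx}^2 P_0$ is itself small, and this support is visited by $X^{\ast,\delta}$ only with small probability. First I would use the identity \eqref{square_difference} to replace $(q^{\ast,\delta})^2-(q^{\ast,0})^2$ by $(u^2-d^2)\,\mathbbm{1}_{A_{s,Z_s}^{\delta}}(X_s^{\ast,\delta})$, which is bounded and vanishes off $A_{s,Z_s}^{\delta}$. By Assumption~\ref{zero_points}(ii), $A_{s,Z_s}^{\delta}\subset\bigcup_{i=1}^{n}I_i^{\delta}$, each $I_i^{\delta}$ an interval of radius $C\sqrt{\delta}$ centered at a zero $x_i$ of $\partial_{xx}^2 P_0(s,\cdot,Z_s)$. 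Interior $C^3$-regularity of $P_0$ (classical for the uniformly parabolic equation \eqref{P0} after the $x\mapsto\ln x$ change of variable, cf.\ Friedman) together with a first-order Taylor expansion about $x_i$ then yields a pointwise bound
\[
\sup_{x\in I_i^{\delta}}\bigl|\partial_{xx}^2 P_0(s,x,Z_s)\bigr|\;\leq\;K(Z_s)\sqrt{\delta},
\]
with $K$ of polynomial growth. This already supplies one factor of $\sqrt{\delta}$ pointwise on $A_{s,Z_s}^{\delta}$.

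The second factor of $\sqrt{\delta}$ comes from the law of $X_s^{\ast,\delta}$, and this is where the time-change announced in the opening sentence of the section enters. Writing $M_r:=\int_t^r q_v^{\ast,\delta}\sqrt{Z_v}\,dW_v$, Dambis--Dubins--Schwarz represents $M_r=\beta_{\tau_r}$ with $\tau_r=\int_t^r(q_v^{\ast,\delta})^2 Z_v\,dv$ bounded below by $d^2\int_t^r Z_v\,dv>0$ almost surely under the Feller condition. Because the log-price SDE for $X^{\ast,\delta}$ is uniformly elliptic in $x$ with coefficients bounded above and away from zero on the (high-probability) event that $Z$ stays in a compact subset of $(0,\infty)$, a Nash--Aronson-type upper bound for the marginal density of $X_s^{\ast,\delta}$ gives
\[
\mathbb{P}\bigl(X_s^{\ast,\delta}\in I_i^{\delta}\bigr)\;\leq\;\frac{\widetilde{C}\sqrt{\delta}}{\sqrt{s-t}},\qquad t<s\leq T,
\]
uniformly in $\delta$.

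Assembling the two factors and controlling $Z_s(X_s^{\ast,\delta})^2$ by Cauchy--Schwarz together with the uniform-in-$\delta$ moment estimates of Appendix~\ref{moment_result}, the integrand of $\rom{1}_0$ is bounded in expectation by $\mathcal{O}(\delta)/\sqrt{s-t}$, and $\int_t^T(s-t)^{-1/2}\,ds<\infty$ then yields $|\rom{1}_0|\leq M_0\delta$ with $M_0$ depending on $(t,x,z)$ but not on $\delta$. The main obstacle is the density/occupation-time estimate: since $q^{\ast,\delta}$ is only bounded and measurable, with discontinuities at sign changes of $\partial_{xx}^2 P^{\delta}$, Malliavin-calculus density bounds are unavailable, and one must rely on the DDS time-change together with a heat-kernel bound for the log-price. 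It is exactly this reduction to a one-dimensional Gaussian clock that makes the two $\sqrt{\delta}$ factors multiply cleanly.
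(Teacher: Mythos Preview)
Your two-factor strategy matches the paper's, and the first factor---the pointwise bound $|\partial_{xx}^2 P_0|\leq K\sqrt{\delta}$ on $A_{s,Z_s}^{\delta}$---is correct (the paper derives it from Proposition~\ref{second_derivative_cvg} rather than a Taylor expansion, but either route works). The gap is in your second factor and in how you combine the two.

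The Nash--Aronson bound $\mathbb{P}(X_s^{\ast,\delta}\in I_i^{\delta})\leq\widetilde{C}\sqrt{\delta}/\sqrt{s-t}$ is not available as stated. The log-price has diffusion coefficient $q^{\ast,\delta}\sqrt{Z_s}$, which is random, coupled to $X$ through the correlation $\rho$, and not bounded below by a deterministic constant; heat-kernel bounds require deterministic uniformly elliptic coefficients, and conditioning on the $Z$-path does not decouple the driving noise when $\rho\neq 0$. More seriously, even granting the density bound, your Cauchy--Schwarz separation of $Z_s(X_s^{\ast,\delta})^2$ from the indicator recovers only $\mathbb{P}(\cdot)^{1/2}$ (since $Z_s$ is unbounded you cannot take the conjugate exponent equal to $1$), so the product of the two factors is $\mathcal{O}(\delta^{3/4})$. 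H\"older with a large exponent improves this to $\mathcal{O}(\delta^{1-\epsilon})$ for any $\epsilon>0$, but never to $\mathcal{O}(\delta)$, and the claimed bound $\mathcal{O}(\delta)/\sqrt{s-t}$ for the expected integrand is therefore unjustified.

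The paper avoids both difficulties with one observation: the factor $Z_s(X_s^{\ast,\delta})^2$ you try to split off is, up to the bounded multiplier $(q^{\ast,\delta})^2\in[d^2,u^2]$, precisely $\sigma_s^2=d\langle X^{\ast,\delta}\rangle_s/ds$. After the pointwise bound one is reduced to
\[
\mathbb{E}_{(t,x,z)}\left[\int_t^T \mathbbm{1}_{\{X_s^{\ast,\delta}\in A_{s,Z_s}^{\delta}\}}\,\sigma_s^2\,ds\right],
\]
and DDS is applied to the local martingale $X^{\ast,\delta}$ itself (not to $\log X^{\ast,\delta}$): with $\tau(v)=\inf\{s>t:\langle X^{\ast,\delta}\rangle_s>v\}$ one has $X^{\ast,\delta}_{\tau(v)}=B_v$, and the substitution $s=\tau(v)$ turns $\sigma_s^2\,ds$ into $dv$, leaving the pure Brownian occupation time $\int_0^{\tau^{-1}(T)}\mathbbm{1}_{\{|B_v-x_i|<C\sqrt{\delta}\}}\,dv$. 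The Gaussian density of $B_v$ then gives the second $\sqrt{\delta}$ directly, with the random horizon $\tau^{-1}(T)$ controlled by a decomposition on $\{\sup_s Z_s\leq M\}$ and Doob's inequality. No marginal density of $X_s^{\ast,\delta}$ is ever needed, and there is no H\"older loss. The step you are missing is to keep $\sigma_s^2$ inside the integral and let it serve as the time-change Jacobian rather than separating it by Cauchy--Schwarz.
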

\begin{proof}
By Proposition \ref{second_derivative_cvg} and $A_{s,z}^{\delta}$ being compact, there exists a constant $C_0$ such that
\begin{equation*}
|\partial_{xx}^2 P_0(s, X_s^{\ast, \delta},Z_s)|\leq C_0\sqrt{\delta}, \text{ for } X_s^{\ast, \delta} \in A_{s,z}^{\delta}.
\end{equation*}
Then, since $0<d\leq q^{\ast,\delta}, q^{\ast,0}\leq u$, we have
\begin{align}
\label{eq:I0_indicator_requirement}
|\rom{1}_0|& \leq \mathbb{E}_{(t,x,z)} \Bigg [\int_t^T \frac{1}{2} |(q^{\ast,\delta})^2 -(q^{\ast,0})^2 |Z_s(X_s^{\ast, \delta})^2
|\partial_{xx}^2 P_0(s, X_s^{\ast, \delta},Z_s)|  ds \Bigg ]\nonumber\\
&\leq   \frac{u^2}{2 d^2}C_0\sqrt{\delta}\,\mathbb{E}_{(t,x,z)}\left[\int_t^T \mathbbm{1}_{\{X_s^{\ast, \delta} \in A_{s,z}^{\delta}\}} (q^{\ast,\delta})^2Z_s(X_s^{\ast, \delta})^2 ds \right].
\end{align}
In order to show that $\rom{1}_0$ is of order $\mathcal{O}(\delta)$, it suffices to show that there exists a  constant $C_1$ such that 
\begin{equation}
\label{indicator_difference_eqn_pre}
\mathbb{E}_{(t,x,z)}\left[\int_t^T \mathbbm{1}_{\{X_s^{\ast, \delta} \in A_{s,z}^{\delta}\}}\sigma^2_s ds \right] \leq C_1\sqrt{\delta},
\end{equation}
where 
$\sigma_s:=q^{\ast,\delta} \sqrt{Z_s} X_s^{\ast, \delta}$ and $dX_s^{\ast, \delta}=\sigma_sdW_s$ by \eqref{processX_ast}.
Define the stopping time 
\begin{equation*}
\tau(v):=\inf\{s>t; \langle X^{\ast,\delta}\rangle_s >v \},
\end{equation*}
where 
\begin{equation*}
\langle X^{\ast,\delta}\rangle_s=\int_t^s \sigma^2(X_u^{\ast, \delta}) du.
\end{equation*}
Then according to Theorem 4.6 (time-change for martingales) of \cite{karatzas2012brownian}, we know that
$X^{\ast,\delta}_{\tau(v)}=B_v$
is a standard one-dimensional Brownian motion on $(\Omega, \mathcal{F}_v^B, \mathbb{Q}^B)$.

From the definition of $\tau(v)$ given above, we have
\begin{equation*}
\label{tau_integral}
\int_t^{\tau(v)} \sigma^2(X_s^{\ast, \delta})ds =v,
\end{equation*}
which tells us that the inverse function of $\tau(v)$ is
\begin{equation}
\label{eq:tau_inverse}
\tau^{-1}(T)=\int_t^T \sigma^2(X_s^{\ast, \delta})ds.
\end{equation}
Next use the substitution $s=\tau(v)$ 
and for any $i \in [1,m(v)]$, we have

\begin{equation}
\label{eq1}
\begin{split}
\int_t^T \mathbbm{1}_{\{|X_s^{\ast, \delta}-x_i|<C\sqrt{\delta}\}}\sigma^2(X_s^{\ast, \delta}) ds
=&\int_t^{\tau^{-1}(T)} \mathbbm{1}_{\{|X_{\tau(v)}^{\ast, \delta}-x_i|<C\sqrt{\delta}\}}\sigma^2(X_{\tau(v)}^{\ast, \delta}) d\tau(v)\\
=& \int_t^{\tau^{-1}(T)} \mathbbm{1}_{\{|X_{\tau(v)}^{\ast, \delta}-x_i|<C\sqrt{\delta}\}}\sigma^2(X_{\tau(v)}^{\ast, \delta}) \frac{1}{\sigma^2(X_{\tau(v)}^{\ast, \delta})}dv\\
=& \int_t^{\tau^{-1}(T)} \mathbbm{1}_{\{|X_{\tau(v)}^{\ast, \delta}-x_i|<C\sqrt{\delta}\}}dv\\
=& \int_t^{\tau^{-1}(T)} \mathbbm{1}_{\{|B_v-x_i|<C\sqrt{\delta}\}}dv. 
\end{split}
\end{equation}
Note that on the set $\{|B_v-x_i|<C\sqrt{\delta}\}$, we have
$(X_s^{\ast, \delta})^2\leq (x_i+C\sqrt{\delta})^2\leq D,$
where $D$ is a positive constant,
and then by \eqref{eq:tau_inverse} we have
\begin{equation}
\label{tau_inverse_upper_bound}
\begin{split}
\tau^{-1}(T)&=\int_t^T (q^{\ast,\delta} \sqrt{Z_s} X_s^{\ast, \delta})^2 ds \leq D u^2T \sup_{t\leq s \leq T}Z_s.
\end{split}
\end{equation}
Then from \eqref{eq1} and \eqref{tau_inverse_upper_bound}, by decomposing in $\{\sup_{t\leq s \leq T}Z_s \leq M \}$ and $\{\sup_{t\leq s \leq T}Z_s > M \}$ for any $M>z$, we  obtain 
\begin{equation}
\label{term_sum}
\begin{split}
\mathbb{E}_{(t,x,z)} \bigg [\int_t^{\tau^{-1}(T)} \mathbbm{1}_{\{|B_v-x_i|<C\sqrt{\delta}\}}dv \bigg ] \sim \mathcal{O}(\sqrt{\delta}).
\end{split}
\end{equation}
with details for this last step given in Appendix \ref{Appendix:proof_of_eq_term_sum}.

By finite union over the $x_i$'s we deduce \eqref{indicator_difference_eqn_pre} and  the theorem follows.
\end{proof}

\begin{remark}
As we noted in Remark \ref{remark:thrid_derivative}, if the third derivative of $P_0$ with respect to $x$ vanishes on the set $S_{t,z}^{0}$, the size of $I_i^{\delta}$ is of order $\mathcal{O}(\delta^{\alpha})$ for $\alpha \in (0, \frac{1}{2})$. In that case, \eqref{eq:I0_indicator_requirement} still holds but \eqref{term_sum} would be of order $\mathcal{O}(\delta^{\alpha})$, 
and then the result of Theorem \ref{Control_I0} and the accuracy in the Main Theorem \ref{main_theorem} would be of order $\mathcal{O}(\delta^{\alpha+1/2})$.
\end{remark}

\subsection{Control of the term $\rom{1}_1$}
\begin{theorem}
\label{Control_I1}
Under Assumptions \ref{h_assumptions}, \ref{Pdelta_assumptions}, \ref{zero_points} and \ref{derivative_assumptions}, there exists a constant $M_1$, such that 
\begin{equation*}
|\rom{1}_1|\leq M_1 \sqrt{\delta}
\end{equation*}
where $M_1$ may depend on
$(t,x,z)$ but not on $\delta$.
That is, $\rom{1}_1$ is of order $\mathcal{O}(\sqrt{\delta})$.
\end{theorem}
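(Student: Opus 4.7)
The plan is to adapt the occupation-time/time-change argument of Theorem \ref{Control_I0}. The crucial structural observation is that $\rom{1}_1$ differs from $\rom{1}_0$ by the absence of the factor $\partial_{xx}^2 P_0$, which on $A^{\delta}_{s,z}$ is itself of size $\mathcal{O}(\sqrt{\delta})$ by Proposition \ref{second_derivative_cvg}. Dropping that factor costs exactly one power of $\sqrt{\delta}$, so the target bound should be $\mathcal{O}(\sqrt{\delta})$ rather than $\mathcal{O}(\delta)$.

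First, I would split $\rom{1}_1$ into two pieces, the one with $\partial_{xz}^2 P_0$ and the one with $\partial_{xx}^2 P_1$. By \eqref{difference} and \eqref{square_difference}, both integrands vanish off $A^{\delta}_{s,z}$, so I can insert the indicator $\mathbbm{1}_{\{X_s^{\ast,\delta}\in A^{\delta}_{s,z}\}}$ and use the crude uniform bounds $|q^{\ast,\delta}-q^{\ast,0}|\le u-d$ and $|(q^{\ast,\delta})^2-(q^{\ast,0})^2|\le u^2-d^2$. Assumption \ref{derivative_assumptions} then bounds $|\partial_{xz}^2 P_0|$ and $|\partial_{xx}^2 P_1|$ by a polynomial in $X_s^{\ast,\delta}$ and $Z_s$.

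Next, for $\delta$ small enough Assumption \ref{zero_points} places $A^{\delta}_{s,z}$ inside the fixed compact set $\bigcup_{i=1}^{m(s)}[x_i-C\sqrt{\delta},x_i+C\sqrt{\delta}]$, which lies at a positive distance from the origin. Hence on $A^{\delta}_{s,z}$ both $X_s^{\ast,\delta}$ and $1/X_s^{\ast,\delta}$ are bounded by constants depending only on the $x_i$'s. Writing $\sigma_s:=q^{\ast,\delta}\sqrt{Z_s}\,X_s^{\ast,\delta}$, this allows me to express the weights as
\begin{equation*}
Z_s(X_s^{\ast,\delta})^2=\frac{\sigma_s^2}{(q^{\ast,\delta})^2}\le\frac{\sigma_s^2}{d^2},\qquad Z_s X_s^{\ast,\delta}=\frac{\sigma_s^2}{(q^{\ast,\delta})^2 X_s^{\ast,\delta}}\le C\,\sigma_s^2\text{ on }A^{\delta}_{s,z},
\end{equation*}
and the remaining polynomial factors in $X_s^{\ast,\delta}$ collapse to constants, leaving only a polynomial $1+Z_s^{c}$ in $Z_s$ (with $c=\max(c_{11},\bar c_{20})$). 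Thus the task reduces to showing
\begin{equation*}
\mathbb{E}_{(t,x,z)}\!\left[\int_t^T \mathbbm{1}_{\{X_s^{\ast,\delta}\in A^{\delta}_{s,z}\}}\,\sigma_s^2\,(1+Z_s^{c})\,ds\right]=\mathcal{O}(\sqrt{\delta}).
\end{equation*}

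To handle this expectation, I would decompose on the events $\{\sup_{t\le s\le T}Z_s\le M\}$ and its complement, exactly as in the proof of \eqref{term_sum} in Appendix \ref{Appendix:proof_of_eq_term_sum}. On the first event the factor $1+Z_s^{c}$ is bounded by $1+M^{c}$, and the argument of Theorem \ref{Control_I0} (Dambis--Dubins--Schwarz time change reducing $X^{\ast,\delta}$ to a Brownian motion, followed by Brownian occupation-time estimates on intervals of width $2C\sqrt{\delta}$) directly produces the required $\mathcal{O}(\sqrt{\delta})$ bound. On the complement, Cauchy--Schwarz splits the expectation into $\sqrt{\mathbb{P}(\sup Z_s>M)}$ times a moment of $\int_t^T\sigma_s^2(1+Z_s^c)\,ds$, and the uniform-in-$\delta$ moment estimates of Appendix \ref{moment_result} on $X^{\ast,\delta}$ and $Z$ together with Markov's inequality make this remainder negligible as $M\to\infty$ at an appropriate rate in $\delta$. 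The main obstacle is precisely this polynomial weight $1+Z_s^c$ coming from the growth of $\partial_{xz}^2 P_0$ and $\partial_{xx}^2 P_1$, which prevents a direct application of the argument of Theorem \ref{Control_I0}; the decomposition above is what makes the estimate go through.
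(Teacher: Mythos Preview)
Your proposal is correct and follows essentially the same route as the paper: insert the indicator $\mathbbm{1}_{\{X_s^{\ast,\delta}\in A^{\delta}_{s,z}\}}$, invoke the polynomial growth bounds of Assumption \ref{derivative_assumptions}, use that $X_s^{\ast,\delta}$ is trapped near the fixed points $x_i$ on $A^{\delta}_{s,z}$ (the paper phrases this as $X_s^{\ast,\delta}\le C(X_s^{\ast,\delta})^2$ there), and then recycle the time-change/occupation-time estimate \eqref{indicator_difference_eqn_pre} together with the decomposition on $\{\sup_{t\le s\le T}Z_s\le M\}$ from Appendix \ref{Appendix:proof_of_eq_term_sum}. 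One small simplification: you do not need to send $M\to\infty$ at a $\delta$-dependent rate; a \emph{fixed} $M>z+\kappa\theta T$ already gives $\mathbb{P}(\sup_s Z_s>M)=\mathcal{O}(\delta)$ via Doob's inequality applied to $\Gamma_s=\int_t^s\sqrt{\delta}\sqrt{Z_v}\,dW_v^Z$ (this is exactly \eqref{term2.1}--\eqref{term2.2}), so after Cauchy--Schwarz the tail contribution is $\mathcal{O}(\sqrt{\delta})$ with no balancing of $M$ against $\delta$ required.
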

\begin{proof}
Under Assumption \ref{derivative_assumptions} and $0<d\leq q^{\ast,\delta}, q^{\ast,0}\leq u$, we have
\begin{equation*}
\begin{split}
|\rom{1}_1| &\leq \mathbb{E}_{(t,x,z)} \Bigg [\int_t^T \Bigg( |q^{\ast,\delta}-q^{\ast,0}| \rho Z_s X_s^{\ast, \delta}|\partial_{xz}^2P_0(s, X_s^{\ast, \delta},Z_s)|\\
&\hskip 2cm +\frac{1}{2}|(q^{\ast,\delta})^2 -(q^{\ast,0})^2 |Z_s(X_s^{\ast, \delta})^2|\partial_{xx}^2 P_1(s, X_s^{\ast, \delta},Z_s)|\Bigg) ds \Bigg ]\\
&\leq   \frac{\rho u}{d^2} \mathbb{E}_{(t,x,z)} \Bigg [\int_t^T  \mathbbm{1}_{\{X_s^{\ast, \delta} \in A_{s,z}^{\delta}\}} (q^{\ast,\delta})^2 Z_s X_s^{\ast, \delta}a_{11} (1+(X_s^{\ast, \delta})^{b_{11}}+Z_s^{c_{11}})ds \Bigg ]\\
&\hskip .3cm+\frac{u^2}{2d^2} \mathbb{E}_{(t,x,z)} \Bigg [\int_t^T \mathbbm{1}_{\{X_s^{\ast, \delta} \in A_{s,z}^{\delta}\}}(q^{\ast,\delta})^2 Z_s(X_s^{\ast, \delta})^2\bar{a}_{20} (1+(X_s^{\ast, \delta})^{\bar{b}_{20}}+Z_s^{\bar{c}_{20}}) ds \Bigg ].
\end{split}
\end{equation*}
Using the same techniques in proving Theorem \ref{Control_I0}, the result that $X_s^{\ast, \delta}$ and $Z_s$ have finite moments uniformly in $\delta$,  and $X_s^{\ast, \delta}\leq C(X_s^{\ast, \delta})^2$ on $\{X_s^{\ast, \delta} \in A_{s,z}^{\delta}\}$, we can deduce that $\rom{1}_1$ is of order $\mathcal{O}(\sqrt{\delta})$.
\end{proof}

With the result of theorem \ref{Control_I0} that $\rom{1}_0$ is of order $\mathcal{O}(\delta)$, the result of theorem \ref{Control_I1} that $\rom{1}_1$ is of order $\mathcal{O}(\sqrt{\delta})$, and the result that $\rom{1}_2$ and $\rom{1}_3$ are uniformly bounded in $\delta$ where derivation of these bounds are given in the appendix \ref{sec:Proof_of_Uniform_Boundedness_I2_I3}, we can see that
\begin{equation*}
\begin{split}
E^{\delta}(t,x,z)=\rom{1}_0+\delta^{\frac{1}{2}}\rom{1}_1+\delta\rom{1}_2+\delta^{\frac{3}{2}}\rom{1}_3,
\end{split}
\end{equation*} is of order $\mathcal{O}(\delta)$, which completes the proof of the main Theorem \ref{main_theorem}.

\section{Numerical Illustration}
\label{sec:simulation}
In this section, we use the nontrivial example in \cite{fouque2014approximation}, and
consider a symmetric European butterfly spread with the payoff function
\begin{equation}
\label{butterfly}
h(x)=(x-90)^{+}-2(x-100)^++(x-110)^+
\end{equation}
presented in Figure \ref{fig:SymmetricButterflySpread}. Although this payoff function does not satisfy the conditions imposed in this paper, we could consider a regularization of it, that is to introduce a small parameter for the regularization and then remove this small parameter asymptotically without changing the  accuracy estimate. This  can be achieved by considering $P_0(T-\epsilon, x)$ as the regularized payoff (see \cite{papanicolaou2003singular} for details on this regularization procedure in the context of the Black--Scholes equation).

\begin{figure}[htbp]
  \centering
  \label{fig:SymmetricButterflySpread}
  \includegraphics[width=5in,height=3.5in]{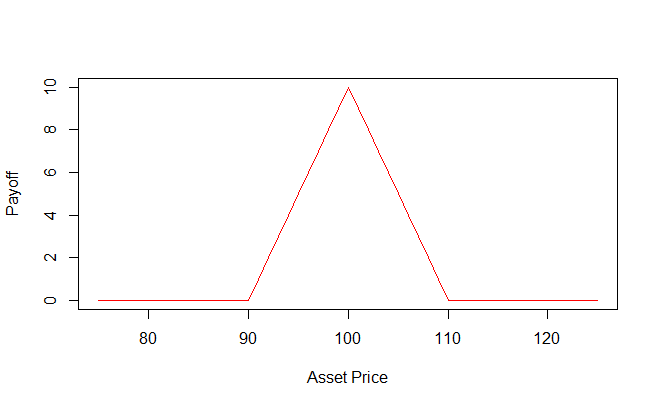}.   
  \caption{{\it The payoff function of a symmetric European butterfly spread.}}
\end{figure}

The original problem is to solve the fully nonlinear PDE \eqref{BSB} in dimension two for the worst case scenario price, which is not analytically solvable in practice. In the following, we use the Crank--Nicolson version of the weighted finite difference method in \cite{guyon2013nonlinear}, which corresponds to the case of solving $P_0$ in one dimension. To extend the original algorithm to our two dimensional case, we apply discretization grids on time and two state variables. Denote $u_{i,j}^n:=P_0(t_n,x_i,z_j)$, $v_{i,j}^n:=P_1(t_n,x_i,z_j)$ and $w_{i,j}^n:=P^{\delta}(t_n,x_i,z_j)$, where $n=0, 1, \cdots, N$ stands for the index of time, $i=0,1,\cdots,I$ stands for the index of the asset price process, and $j=0,1,\cdots,J$ stands for the index of the volatility process. In the following, we build a uniform grid of size $100 \times 100$ and use $20$ time steps.

We use the classical discrete approximations to the continuous derivatives:
\begin{align*}
&\partial_{x}(w_{i,j}^{n})=\frac{w_{i+1,j}^{n}-w_{i-1,j}^{n}}{2\Delta x} &\partial_{zz}^2(w_{i,j}^{n})=\frac{w_{i,j+1}^{n}+w_{i,j-1}^{n}-2w_{i,j}^{n}}{\Delta z^2}\\
&\partial_{xx}^2(w_{i,j}^{n})=\frac{w_{i+1,j}^{n}+w_{i-1,j}^{n}-2w_{i,j}^{n}}{\Delta x^2} & \partial_{z}(w_{i,j}^{n})=\frac{w_{i,j+1}^{n}-w_{i,j-1}^{n}}{2\Delta z}\\
&\partial_{xz}^2(w_{i,j}^{n})=\frac{w_{i+1,j+1}^{n}+w_{i-1,j-1}^{n}-w_{i-1,j+1}^{n}-w_{i+1,j-1}^{n}}{4\Delta x \Delta z} & \partial_{t}(w_{i,j}^{n})=\frac{u_{i,j}^{n+1}-u_{i,j}^n}{\Delta t}
\end{align*}

To simplify our algorithms and facilitate the implementation by matrix operations, we denote the following operators without any parameters:
\begin{align*}
&L_{xx}=zx^2\partial_{xx}^2 & L_{zz}=z\partial_{zz}^2 && L_{xz}=xz\partial_{xz}^2\nonumber\\
&L_{x}=x\partial_{x} & L_{z1}=\partial_{z} && L_{z2}=z\partial_{z}
\end{align*}

\subsection{Simulation of $P_0$ and $P_1$}
Note that in the PDE \eqref{P1} for $P_1$ , $q^{\ast,0}$ must be solved in the PDE \eqref{P0} for $P_0$ . Therefore, we solve $P_0$ and $P_1$ together in each $100\times 100$ space grids and iteratively back to the starting time.\\

\begin{breakablealgorithm}
\caption{Algorithm to solve $P_0$ and $P_1$}
\label{P0_P1_algorithm}
\begin{algorithmic}[1]
\STATE{Set 
$u_{i,j}^N=h(x_I)$ and $v_{i,j}^N=0$.}
\STATE{Solve $u_{i,j}^n$ (predictor)
\begin{equation*}
\begin{split}
\frac{u_{i,j}^{n+1}-u_{i,j}^n}{\Delta t}+[\frac{1}{2}(q_{i,j}^{n+1})^2 L_{xx}] (\frac{u_{i,j}^{n+1}+u_{i,j}^n}{2})=0
\end{split}
\end{equation*}
with 
\begin{equation*}
\begin{split}
q_{i,j}^{n+1}=&u\mathbbm{1}_{\{u^2L_{xx}(u_{i,j}^{n+1})\geq d^2L_{xx}(u_{i,j}^{n+1})\}}\
+ d\mathbbm{1}_{\{u^2L_{xx}(u_{i,j}^{n+1})< d^2L_{xx}(u_{i,j}^{n+1})\}}\
\end{split}
\end{equation*}}

\STATE{Solve $u_{i,j}^n$ (corrector)
\begin{equation*}
\begin{split}
\frac{u_{i,j}^{n+1}-u_{i,j}^n}{\Delta t}+[\frac{1}{2}(q_{i,j}^{n})^2 L_{xx}] (\frac{u_{i,j}^{n+1}+u_{i,j}^n}{2})=0
\end{split}
\end{equation*}
with 
\begin{equation*}
\begin{split}
q_{i,j}^{n}=&u\mathbbm{1}_{\{u^2L_{xx}(\frac{u_{i,j}^{n+1}+u_{i,j}^{n}}{2})\geq d^2L_{xx}(\frac{u_{i,j}^{n+1}+u_{i,j}^{n}}{2})\}}\
+ d\mathbbm{1}_{\{u^2L_{xx}(\frac{u_{i,j}^{n+1}+u_{i,j}^{n}}{2})< d^2L_{xx}(\frac{u_{i,j}^{n+1}+u_{i,j}^{n}}{2})\}}\
\end{split}
\end{equation*}}

\STATE{Solve $v_{i,j}^n$
$$\frac{v_{i,j}^{n+1}-v_{i,j}^n}{\Delta t}+\frac{1}{2}(q_{i,j}^{n})^2 L_{xx}(\frac{v_{i,j}^{n+1}+v_{i,j}^n}{2})+\rho (q_{i,j}^{n})L_{xz}(\frac{u_{i,j}^{n+1}+u_{i,j}^n}{2}) =0$$}
\end{algorithmic}
\end{breakablealgorithm}
\bigskip
Throughout all the experiments, we set $X_0=100$, $Z_0=0.04$, $T=0.25$, $r=0$, $d=0.75$, and $u=1.25$. Therefore, the two deterministic bounds for $P_0$ are given by 
$\underline{\sigma}=d\sqrt{Z_0}=0.15$ and $\overline{\sigma}=u\sqrt{Z_0}=0.25$, which are standard Uncertain Volatility model bounds setup. From Figure \ref{fig:P0_BS_compare}, we can see that $P_0$ is above the Black--Scholes prices with constant volatility $0.15$ and $0.25$ all the time, which corresponds to the fact that we need extra cash to superreplicate the option when facing the model ambiguity. As expected, the Black--Scholes prices with constant volatility  $0.25$ (resp. $0.15$) is a good approximation when $P_0$ is convex (resp. concave).

\begin{figure}[htbp]
  \centering
  \label{fig:P0_BS_compare}
  \includegraphics[width=5in,height=3.5in]{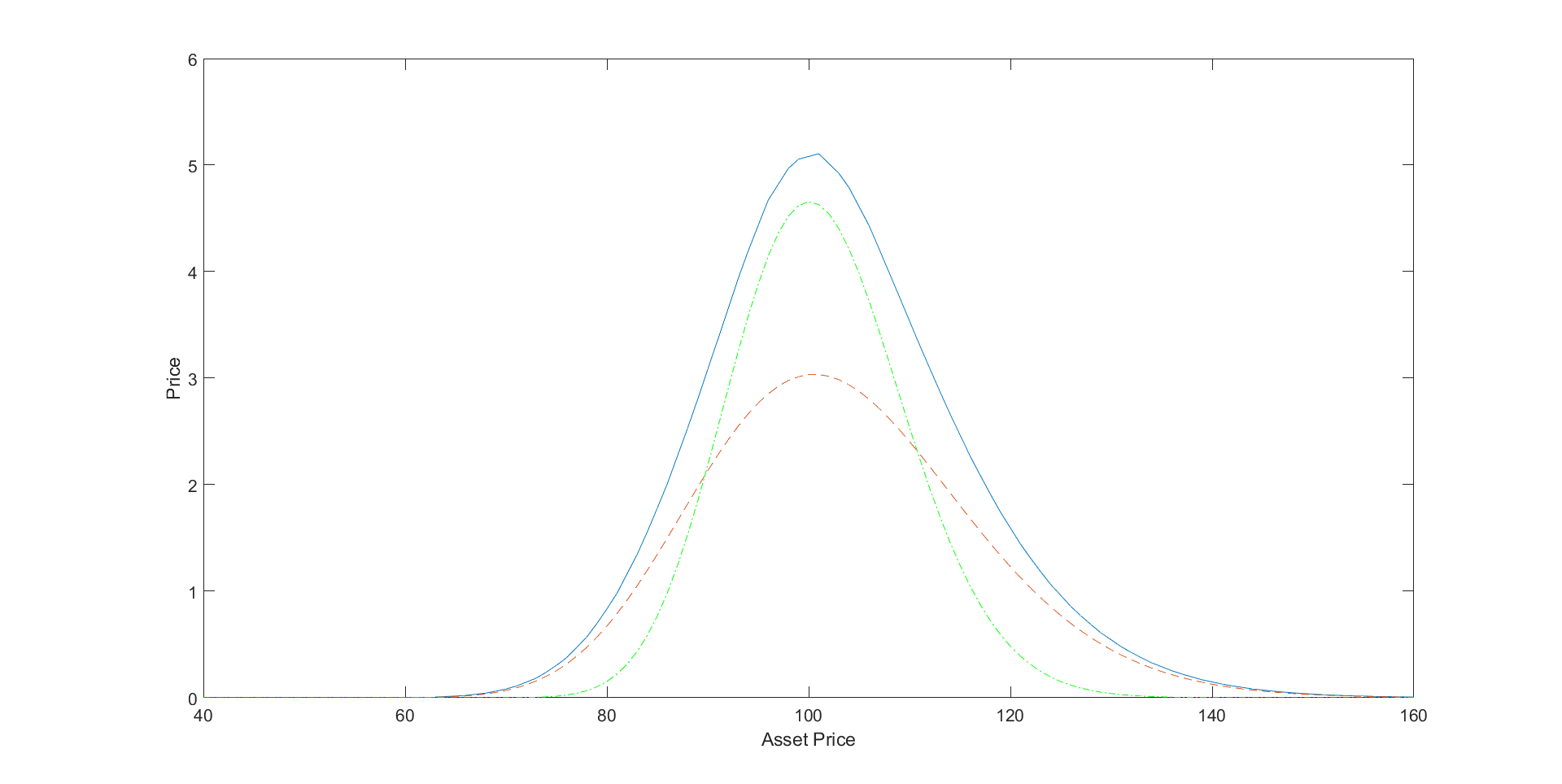}.   
  \caption{\it The blue curve represents the usual uncertain volatility model price $P_0$ with two deterministic bounds $0.15$ and $0.25$, the red curve marked with "- -" represents the Black--Scholes prices with $\sigma=0.25$, the green curve marked with '-.' represents the Black--Scholes prices with $\sigma=0.15$.}
\end{figure}

\subsection{Simulation of $P_\delta$}
Considering the main BSB equation given by \eqref{BSB},
if we relax the restriction 
$q\in [d,u]$ to $q\in \mathbb{R}$, the optimizer of 
$$f(q):=\frac{1}{2}q^2zx^2\partial_{xx}^2 P^{\delta}+ q \rho z x\sqrt{\delta}\partial_{xz}^2 P^{\delta}$$ is given by 
$\hat{q}^{\ast,\delta} =-\frac{\rho \sqrt{\delta} \partial_{xz}^2P^{\delta}}{x\partial_{xx}^2P^{\delta}},$
and the maximum value of $f(q)$ is given by
$f(\hat{q}^{\ast,\delta}) =-\frac{\rho^2 \delta z(\partial_{xz}^2P^{\delta})^2}{2\partial_{xx}^2P^{\delta}}.$
Therefore, 
$$\sup_{q\in [d,u]}f(q)=f(u)\vee f(d)\vee f(\hat{q}^{\ast,\delta}).$$ 

To simplify the algorithm, we denote 
\begin{equation*}
L_A=\frac{1}{2}u^2 L_{xx}+u\rho \sqrt{\delta}L_{xz},
\qquad
L_B=\frac{1}{2}d^2 L_{xx}+d\rho \sqrt{\delta}L_{xz},
\qquad
L_C=-\frac{\rho^2 \delta (L_{xz})^2}{2L_{xx}}.
\end{equation*}

\begin{breakablealgorithm}
\caption{Algorithm to solve $P^{\delta}$}\label{P_delta_algorithm}
\begin{algorithmic}[1]
\STATE{Set 
$w_{i,j}^N=h(x_I)$.}
\STATE{Predictor: 
\begin{equation*}
\begin{split}
\frac{w_{i,j}^{n+1}-w_{i,j}^n}{\Delta t}+[\frac{1}{2}(q_{i,j}^{n+1})^2 L_{xx}+(q_{i,j}^{n+1})\rho \sqrt{\delta}L_{xz}
+\delta(\frac{1}{2}L_{zz} +\kappa\theta L_{z1}- \kappa L_{z2})] (\frac{w_{i,j}^{n+1}+w_{i,j}^n}{2})=0
\end{split}
\end{equation*}
with 
\begin{equation*}
\begin{split}
q_{i,j}^{n+1}=&u\mathbbm{1}_{\{L_A(w_{i,j}^{n+1})=\max\{L_A, L_B,L_C\}(w_{i,j}^{n+1})\}}+ d\mathbbm{1}_{\{L_B(w_{i,j}^{n+1})=\max\{L_A, L_B,L_C\}(w_{i,j}^{n+1})\}}\\
&-\frac{\rho \sqrt{\delta} L_{xz}}{L_{xx}}(w_{i,j}^{n+1})\mathbbm{1}_{\{L_C(w_{i,j}^{n+1})=\max\{L_A, L_B,L_C\}(w_{i,j}^{n+1})\}}
\end{split}
\end{equation*}}

\STATE{Corrector: 
\begin{equation*}
\begin{split}
\frac{w_{i,j}^{n+1}-w_{i,j}^n}{\Delta t}+[\frac{1}{2}(q_{i,j}^{n})^2 L_{xx}+(q_{i,j}^{n})\rho \sqrt{\delta}L_{xz}
+\delta(\frac{1}{2}L_{zz} +\kappa\theta L_{z1}
- \kappa L_{z2})] (\frac{w_{i,j}^{n+1}+w_{i,j}^n}{2})=0
\end{split}
\end{equation*}
with 
\begin{equation*}
\begin{split}
q_{i,j}^{n}=&u\mathbbm{1}_{\{L_A(\frac{w_{i,j}^{n+1}+w_{i,j}^n}{2})=\max\{L_A, L_B,L_C\}(\frac{w_{i,j}^{n+1}+w_{i,j}^n}{2})\}}+ d\mathbbm{1}_{\{L_B(\frac{w_{i,j}^{n+1}+w_{i,j}^n}{2})=\max\{L_A, L_B,L_C\}(\frac{w_{i,j}^{n+1}+w_{i,j}^n}{2})\}}\\
&-\frac{\rho \sqrt{\delta} L_{xz}}{L_{xx}}(\frac{w_{i,j}^{n+1}+w_{i,j}^n}{2})\mathbbm{1}_{\{L_C(\frac{w_{i,j}^{n+1}+w_{i,j}^n}{2})=\max\{L_A, L_B,L_C\}(\frac{w_{i,j}^{n+1}+w_{i,j}^n}{2})\}}
\end{split}
\end{equation*}}

\end{algorithmic}
\end{breakablealgorithm}
\bigskip
We set $\kappa=15$ and $\theta=0.04$, which satisfies the Feller condition required in this paper.

\subsection{Error analysis}
To visualize the approximation as $\delta$ vanishes, we plot $P^{\delta}$, $P_0$ and $P_0+\sqrt{\delta}P_1$ with ten equally
spaced values of $\delta$ from $0.05$ to $0$, and consider a typical case of correlation $\rho=-0.9$ (see \cite{in2010adi}). In Figure \ref{fig:four_plots}, we see that the first order prices capture the main feature of the worst case scenario prices for different values of $\delta$. As can be seen, for $\delta$ very small, the approximation performs very well and it worth noting that, even for $\delta$ not very small such as $0.1$, it still performs well.

\begin{figure}
  \centering
  \label{fig:four_plots}
  \includegraphics[height=6in,width=6in]{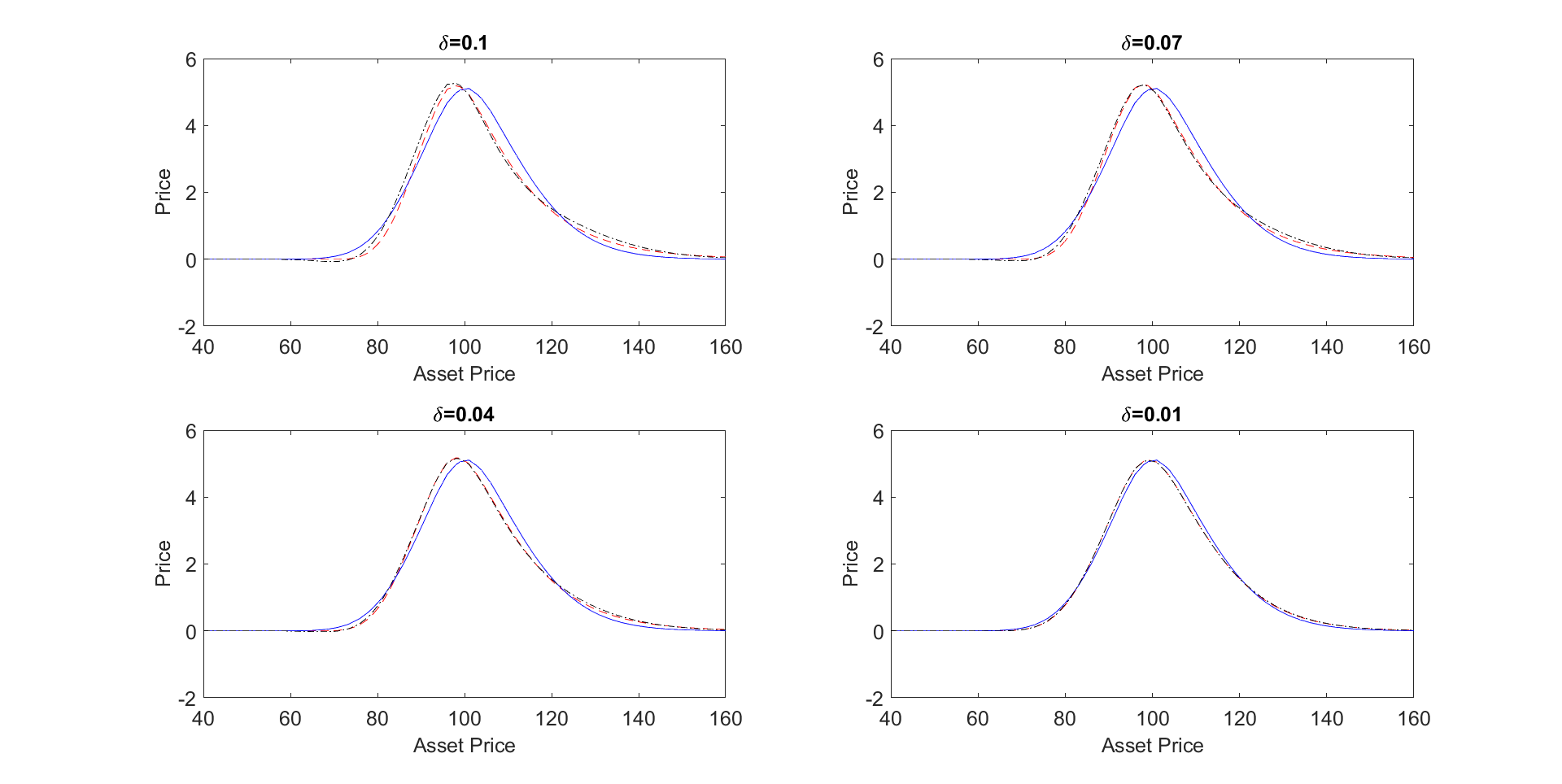}.   
  \caption{\it The red curve marked with "- -" represents the worst case scenario prices $P^{\delta}$; the blue curve represents the leading term $P_0$; the black curve marked with '-.' represents the approximation $P_0+\sqrt{\delta}P_1$.}
\end{figure}

\begin{figure}[htbp]
  \centering
  \label{fig:error}
  \includegraphics[height=3in,width=5in]{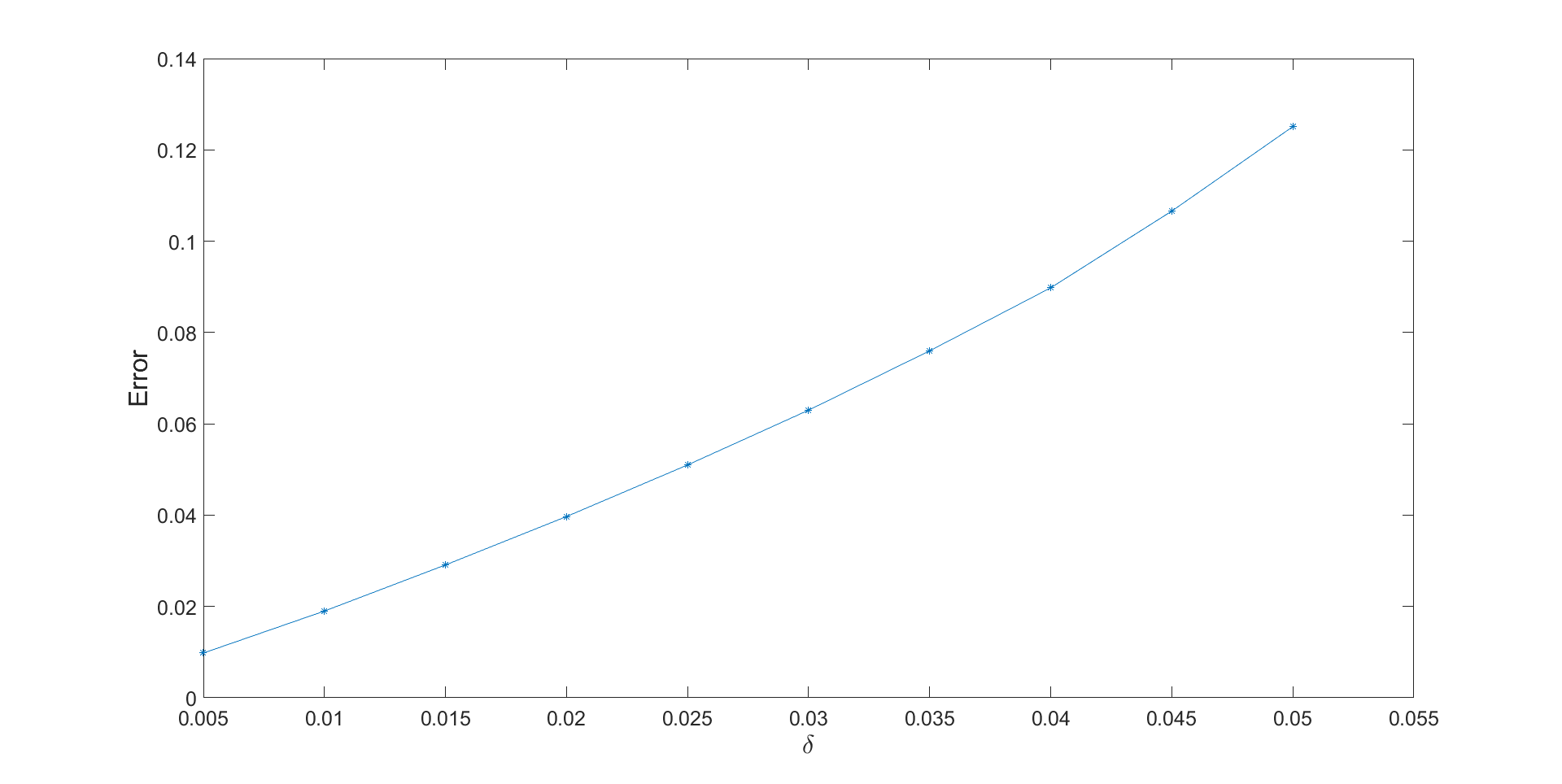}.   
  \caption{\it Error for different values of $\delta$}
\end{figure}

\begin{figure}[htbp]
  \centering
  \label{fig:second_derivative}
  \includegraphics[height=3in,width=5in]{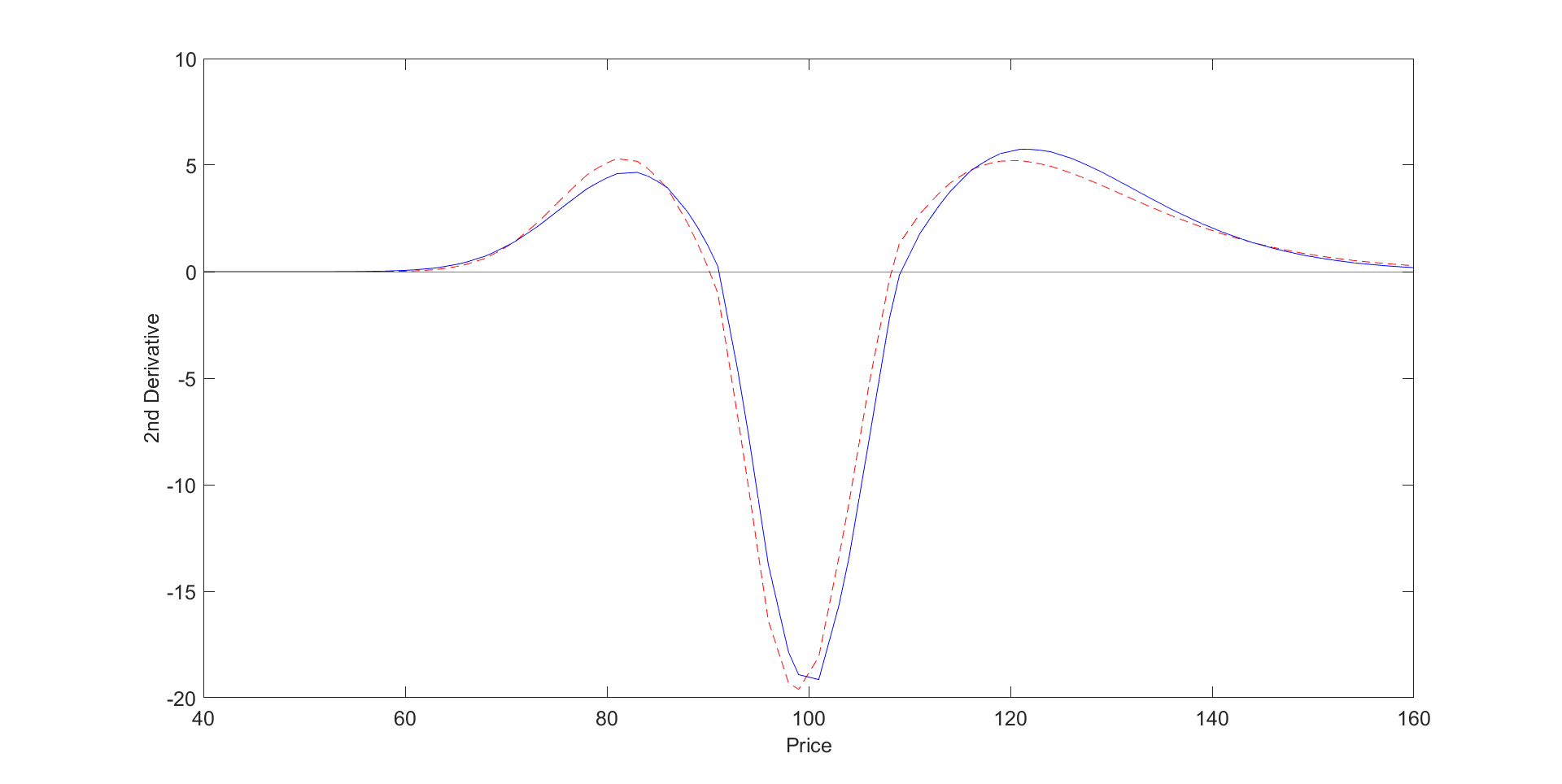}.   
  \caption{\it The red curve marked with "- -" represents $\partial_{xx}^2P^{\delta}$; the blue curve represents $\partial_{xx}^2P_0$.}
\end{figure}

To investigate the convergence of the error of our approximation as $\delta$ decrease, we compute the error of the approximation for each $\delta$ as following
$$\text{error}(\delta)=\sup_{x,z} |P^{\delta}(0,x,z)-P_0(0,x,z)-\sqrt{\delta}P_1(0,x,z)|.$$
As shown in Figure \ref{fig:error}, the error decreases linearly as $\delta$ decreases (at least for $\delta$ small enough), as predicted by  our Main Theorem \ref{main_theorem}.

\begin{remark}
In Remark \ref{remark:thrid_derivative}, for the case that $P_0$ has a third derivative with respect to $x$, which does not vanish on the set $S_{t,z}^{0}$, we have Assumption \ref{zero_points} (ii) as a direct result. In Figure \ref{fig:second_derivative}, we can see that the slopes at the zero points of $\partial_{xx}^2P^{\delta}$ and $\partial_{xx}^2P_0$ are not $0$, hence for this symmetric butterfly spread, Assumption \ref{zero_points} (ii) is satisfied.
\end{remark}

\section{Conclusion}
\label{sec:conclusion}
In this paper, we have proposed the uncertain volatility models with stochastic bounds driven by a CIR process. Our method  is not limited to the CIR process and can be used with any other positive stochastic processes such as positive functions of an OU process. We further studied the asymptotic behavior of the worst case
scenario option prices in the regime of slowly varying stochastic bounds. This study not only helps
 understanding that  uncertain volatility models with stochastic bounds are more flexible than uncertain volatility models with constant bounds for option pricing and  risk management, but also provides 
an approximation procedure for worst-case scenario option prices when the bounds are slowly varying. From the
numerical results, we see that the approximation procedure works really well even when the payoff function does not satisfy the requirements enforced in this paper, and even when $\delta$ is not so small such as $\delta=0.1$. 

Note that as risk evaluation in a financial management requires more accuracy and efficiency nowadays, our approximation procedure highly improves the estimation and still maintains the 
same efficiency level as the regular uncertain volatility models.
Moreover, the worst case scenario price $P^{\delta}$ \eqref{BSB} has to be recomputed for any change in its parameters $\kappa$, $\theta$ and $\delta$. However, the PDEs \eqref{P0} and \eqref{P1} for $P_0$ and $P_1$
are independent of these parameters, so the approximation requires  only to compute $P_0$ and $P_1$ once for
all values of $\kappa$, $\theta$ and $\delta$.


\appendix
\section{Moments of $Z_t$ and $X_t$}
\label{moment_result}
\begin{proposition} 
The process $Z$ has finite moments of any order uniformly in $0\leq \delta \leq 1$ for $t\leq T$.
\end{proposition}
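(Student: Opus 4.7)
The plan is to obtain bounds on $\mathbb{E}_{(t,x,z)}[Z_s^n]$ for any positive integer $n$ by an induction argument based on It\^o's formula, exploiting the mean-reverting drift to control the growth in $s\in[t,T]$ and the boundedness of $\delta\in[0,1]$ to get uniformity in $\delta$.

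First I would introduce the localizing stopping times $\tau_M:=\inf\{s\ge t : Z_s\ge M\}$ to justify the computations, and apply It\^o's formula to $Z_{s\wedge\tau_M}^n$ using the SDE~\eqref{processZ}. This yields
\begin{equation*}
Z_{s\wedge\tau_M}^n = z^n + \int_t^{s\wedge\tau_M}\!\Big(n\delta\kappa\theta\, Z_u^{n-1} - n\delta\kappa\, Z_u^{n} + \tfrac{1}{2}n(n-1)\delta\, Z_u^{n-1}\Big)du + \int_t^{s\wedge\tau_M}\! n\sqrt{\delta}\, Z_u^{n-1/2}dW_u^Z.
\end{equation*}
Taking expectation kills the stochastic integral (which is a true martingale up to $\tau_M$ since the integrand is bounded), and discarding the negative term $-n\delta\kappa Z_u^n$ gives the inequality
\begin{equation*}
\mathbb{E}[Z_{s\wedge\tau_M}^n] \;\le\; z^n + n\delta\Big(\kappa\theta + \tfrac{n-1}{2}\Big)\int_t^{s}\mathbb{E}[Z_{u\wedge\tau_M}^{n-1}]\,du.
\end{equation*}

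Next I would proceed by induction on $n$. The base case $n=0$ is trivial, and for $n=1$ the drift alone gives $\mathbb{E}[Z_{s\wedge\tau_M}]\le z + \delta\kappa\theta T\le z+\kappa\theta T$ uniformly in $\delta\in[0,1]$ and $M$. Assuming $\sup_{0\le\delta\le 1}\sup_{t\le u\le T}\mathbb{E}[Z_u^{n-1}]\le C_{n-1}$ (which by monotone convergence transfers from the localized bound), the displayed inequality yields
\begin{equation*}
\mathbb{E}[Z_{s\wedge\tau_M}^n] \;\le\; z^n + n\Big(\kappa\theta+\tfrac{n-1}{2}\Big)C_{n-1}\,T \;=:\; C_n,
\end{equation*}
uniformly in $\delta\in[0,1]$ and $M$, since the factor $\delta$ on the right is absorbed by $\delta\le 1$. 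Letting $M\to\infty$ and applying Fatou's lemma gives $\sup_{0\le\delta\le 1}\sup_{t\le s\le T}\mathbb{E}[Z_s^n]\le C_n$.

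The only mild obstacle is the handling of the stochastic integral $\int n\sqrt{\delta}Z_u^{n-1/2}dW_u^Z$, whose integrand is only $1/2$-H\"older at the origin; the localization by $\tau_M$ circumvents this entirely and the induction then delivers uniform-in-$\delta$ moment bounds for every integer $n$, from which all real moments follow by Jensen.
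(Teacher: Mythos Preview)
Your argument is correct. The localization via $\tau_M$, the It\^o computation, dropping the dissipative term $-n\delta\kappa Z_u^{n}$, and the induction in $n$ with $\delta\le 1$ absorbed into the constant are all standard and sound; the passage from the localized bound to the unlocalized one via Fatou is the right finishing step. One small point of presentation: the induction is cleanest if it is carried out directly on the localized quantities $m_M^{(n)}(s):=\mathbb{E}[Z_{s\wedge\tau_M}^{n}]$, showing $m_M^{(n)}(s)\le C_n$ uniformly in $M$, since your displayed inequality involves $\mathbb{E}[Z_{u\wedge\tau_M}^{n-1}]$ rather than $\mathbb{E}[Z_u^{n-1}]$; as written, the parenthetical about monotone convergence is slightly imprecise, but the correct logic is evidently what you intend.

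As for comparison with the paper: the paper does not give a proof at all but simply cites Lemma~4.9 of \cite{fouque2011multiscale}. Your self-contained derivation is therefore more informative for the reader; it is also essentially the same It\^o-plus-induction argument that the cited lemma uses, so there is no substantive methodological divergence.
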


The proof is given by Lemma 4.9 in \cite{fouque2011multiscale}. Thus, 
\begin{equation}
\label{Z_moment}
\mathbb{E}_{(t,x,z)}\left[\int_t^T|Z_s|^k ds\right]\leq \mathbb{E}_{(0,z)}\left[\int_0^T|Z_s|^k ds\right] \leq C_k(T,z),
\end{equation}
where $C_k(T,z)$ may depend on ($k, T, z$) but not on $\delta$.

Denote the moment generating function of the integrated CIR process given $Z_s|_{s=t}=z$ as 
$$M_z^{\delta}(\eta):=\mathbb{E}_{(t,z)}[\exp(\eta \int_0^t Z_s ds)],\;\;\;\; \text{for }\eta \in \mathbb{R},$$
and then we have the following lemma:
\begin{proposition}
\label{integrated_moment}
For $\eta \in \mathbb{R}$ independent of $\delta$ and $t$, $M_z^{\delta}(\eta)$ is bounded uniformly in $0\leq \delta \leq 1$ and $t\leq T$. That is $|M_z^{\delta}(\eta)|\leq N(T,z,\eta) <\infty$, where $N(T,z,\eta)$ is independent of $\delta$ and $t$.
\end{proposition}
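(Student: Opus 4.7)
The plan is to exploit the affine structure of the CIR process to write $M_z^\delta(\eta)$ in closed form via an exponential-affine ansatz, and then bound the resulting coefficients uniformly in $\delta$.

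First, by Feynman--Kac applied to \eqref{processZ}, the function $u^\delta(t,z):= M_z^\delta(\eta)$, viewed as a function of the terminal time $t$ and initial value $z$, satisfies
\begin{equation*}
\partial_t u^\delta = \delta\kappa(\theta-z)\partial_z u^\delta + \tfrac{\delta z}{2}\partial_{zz}^2 u^\delta + \eta z\,u^\delta,\qquad u^\delta(0,z)=1.
\end{equation*}
The affine dependence of the coefficients on $z$ suggests the ansatz $u^\delta(t,z) = \exp\bigl(A^\delta(t)+B^\delta(t)z\bigr)$. Matching powers of $z$ yields the system
\begin{equation*}
A^\delta{}'(t) = \delta\kappa\theta\,B^\delta(t),\qquad
B^\delta{}'(t) = \tfrac{\delta}{2}B^\delta(t)^2 - \delta\kappa B^\delta(t) + \eta,
\end{equation*}
with $A^\delta(0)=B^\delta(0)=0$. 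The second equation is a scalar Riccati.

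Next I would linearize the Riccati via the standard substitution $B^\delta = -\tfrac{2}{\delta}\phi'/\phi$, reducing it to the linear second-order ODE
\begin{equation*}
\phi''(t) + \delta\kappa\,\phi'(t) + \tfrac{\delta\eta}{2}\phi(t) = 0,\qquad \phi(0)=1,\ \phi'(0)=0,
\end{equation*}
whose characteristic roots are $r_\pm = \tfrac{1}{2}\bigl(-\delta\kappa \pm \sqrt{\delta^2\kappa^2-2\delta\eta}\bigr)$. The crucial task is to show that $\phi^\delta(t)$ stays bounded away from $0$ on $[0,T]$ uniformly in $\delta\in(0,1]$. For $\eta\le 0$ the roots are real and one obtains directly $\phi^\delta\ge 1$ and $B^\delta\le 0$, hence $u^\delta\le 1$. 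For $\eta>0$ the roots are complex whenever $\delta<2\eta/\kappa^2$ and the explicit solution is
\begin{equation*}
\phi^\delta(t) = e^{-\delta\kappa t/2}\Bigl[\cos(\beta_\delta t)+\tfrac{\delta\kappa}{2\beta_\delta}\sin(\beta_\delta t)\Bigr],\qquad \beta_\delta := \tfrac{1}{2}\sqrt{2\delta\eta-\delta^2\kappa^2}\sim\sqrt{\delta\eta/2}\ \ (\delta\to 0).
\end{equation*}
Its first zero is located at $t^\ast(\delta)\sim \pi/\sqrt{2\delta\eta}$, which exceeds any prescribed $T$ for $\delta$ small enough; on a compact subinterval of $(0,1]$ bounded away from zero, continuity of $\phi^\delta$ in $(t,\delta)$ together with the classical CIR-MGF finiteness (for $T$ below the classical blow-up time) gives a positive lower bound by compactness.

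Combining the two regimes yields $\phi^\delta(t)\ge c(T,\eta)>0$ on $[0,T]\times(0,1]$, which bounds $B^\delta$ and $A^\delta = \delta\kappa\theta\int_0^t B^\delta(s)\,ds$ uniformly. Then $M_z^\delta(\eta)=\exp(A^\delta(t)+B^\delta(t)z)$ is dominated by a constant $N(T,z,\eta)$ independent of $\delta$ and $t\in[0,T]$, as required. The main technical obstacle is the uniform non-vanishing of $\phi^\delta$ when $\eta>0$, since the Riccati genuinely blows up in finite time for the non-rescaled CIR process; the resolution is to exploit the $\delta$-scaling in \eqref{processZ}, which pushes the blow-up time to $\mathcal{O}(\delta^{-1/2})\gg T$ in the small-$\delta$ regime, while the compact-$\delta$ case reduces to the classical CIR moment bound.
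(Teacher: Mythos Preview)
Your approach is essentially the same as the paper's. The paper also relies on the explicit affine formula for the CIR MGF (cited from Lepage et al.\ rather than derived via the Riccati linearization), and then splits into the real-discriminant case $\delta^2\kappa^2-2\eta\delta\ge 0$ versus the complex case; in the latter it uses the same small-$\delta$ expansion you describe, obtaining a region $I_1\subset[0,1]$ on which the bound holds, and then invokes compactness on the remainder.

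The one substantive difference is how the compact-$\delta$ remainder is handled when $\eta>0$. You appeal to ``the classical CIR-MGF finiteness (for $T$ below the classical blow-up time)'' and compactness of $\phi^\delta$ in $(t,\delta)$, which tacitly assumes that $\phi^\delta$ does not vanish on $[0,T]$ for any $\delta$ in that compact set. The paper instead isolates a further open set $I_2$ of parameter values where the denominator $v\cos(\tfrac{vt}{2})+\delta\kappa\sin(\tfrac{vt}{2})$ is near zero, and argues that on $I_2$ the exponential factor $e^{-z\Xi}$ decays fast enough to dominate the blow-up of $\Psi$ (via $\tfrac{1}{x^n}e^{-a/x}\to 0$), so that $M_z^\delta(\eta)$ remains bounded there; only on the residual compact set $[0,1]\setminus(I_1\cup I_2)$ does it invoke continuity. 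Your argument is cleaner but leaves this potential zero-crossing unaddressed; the paper's extra step is precisely meant to cover it.
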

\begin{proof}
The moment generating function of the integrated CIR process has an explicit form, which is presented in Section 5 of  \cite{lepage2006continuous}. That is
\begin{equation*}
\begin{split}
M_z^{\delta} (\eta)=\Psi(\eta, t)e^{-z\Xi(\eta, t)},
\end{split}
\end{equation*}
where
\begin{equation*}
\begin{split}
\Psi(\eta, t)&=\left( \frac{\bar{b} e^{\delta \kappa \frac{t}{2}} }{\bar{b} \frac{e^{\bar{b} \frac{t}{2}}+e^{-\bar{b} \frac{t}{2}}}{2}+\delta \kappa \frac{e^{\bar{b} \frac{t}{2}}-e^{-\bar{b} \frac{t}{2}}}{2}} \right )^{2\kappa},
\end{split}
\end{equation*}
\begin{equation*}
\begin{split}
\Xi(\eta, t)&=\left( \frac{2\eta \frac{e^{\bar{b} \frac{t}{2}}-e^{-\bar{b} \frac{t}{2}}}{2}  }{\bar{b} \frac{e^{\bar{b} \frac{t}{2}}+e^{-\bar{b} \frac{t}{2}}}{2}+\delta \kappa \frac{e^{\bar{b} \frac{t}{2}}-e^{-\bar{b} \frac{t}{2}}}{2}} \right )^{2\kappa},
\end{split}
\end{equation*}
 and $$\bar{b}=\sqrt{b^2-2\eta \sigma^2}=\sqrt{\delta^2\kappa^2-2\eta\delta}.$$
 In the following, we are going to show $|M_z^{\delta}(\eta)|\leq N(T,z,\eta) <\infty$, where $N(T,z,\eta)$ is independent of $\delta$ and $t$.

\begin{itemize}
\item If $\delta^2\kappa^2-2\eta\delta\geq 0$, we have $\bar{b}\geq 0$ and
\begin{align*}
\Psi(\eta, t)&\leq \left( \frac{\bar{b} e^{\delta \kappa \frac{t}{2}} }{\bar{b} \frac{e^{\bar{b} \frac{t}{2}}+e^{-\bar{b} \frac{t}{2}}}{2} }  \right)^{2\kappa} & 
( \delta \kappa \frac{e^{\bar{b} \frac{t}{2}}-e^{-\bar{b} \frac{t}{2}}}{2}\geq 0)\\
& \leq (e^{\delta \kappa \frac{t}{2}})^{2\kappa}
&( \frac{e^{\bar{b} \frac{t}{2}}+e^{-\bar{b} \frac{t}{2}}}{2}\geq 1)\\
& \leq (e^{\kappa \frac{T}{2}})^{2\kappa}.
\end{align*}
Since $\Xi(\eta, t) \geq 0$, we have $e^{-z\Xi(\eta, t)}\leq 1$. Therefore
$$M_z^{\delta}(\eta)=\Psi(\eta, t)e^{-z\Xi(\eta, t)}\leq (e^{\kappa \frac{T}{2}})^{2\kappa}. $$
\item  If $\delta^2\kappa^2-2\eta\delta< 0$, let $v=\sqrt{2\eta\delta-\delta^2\kappa^2}$ which is positive, then
\begin{equation*}
\begin{split}
M_z^{\delta} (\eta)&=\Psi(\eta, t)e^{-z\Xi(\eta, t)}\\
&=\left(\frac{i v e^{\delta \kappa \frac{t}{2} }}{i v \cos(\frac{vt}{2})+\delta \kappa i \sin( \frac{vt}{2}) } \right)^{2\kappa} \exp \left[-z (\frac{2\eta i \sin( \frac{vt}{2}) }{i v \cos(\frac{vt}{2})+\delta \kappa i \sin( \frac{vt}{2}) } )^{2\kappa} \right]\\
&=\left(\frac{v e^{\delta \kappa \frac{t}{2} }}{v \cos(\frac{vt}{2})+\delta \kappa \sin( \frac{vt}{2}) } \right)^{2\kappa} \exp \left[-z (\frac{2\eta \sin( \frac{vt}{2}) }{v \cos(\frac{vt}{2})+\delta \kappa \sin( \frac{vt}{2}) } )^{2\kappa} \right].
\end{split}
\end{equation*}
\begin{itemize}
 \item In the limit of small $v$, since $$\left(\frac{2\eta \sin( \frac{vt}{2}) }{v \cos(\frac{vt}{2})+\delta \kappa \sin( \frac{vt}{2}) } \right)^{2\kappa}\geq 0), $$ we have\\
\begin{equation*}
\begin{split}
M_z^{\delta}(\eta)&\leq \left(\frac{v e^{\delta \kappa \frac{t}{2} }}{v \cos(\frac{vt}{2})+\delta \kappa \sin( \frac{vt}{2}) } \right)^{2\kappa}\\
&=\left(\frac{v e^{\delta \kappa \frac{t}{2} }}{v (1+\mathcal{O}(v^2t^2 ) )+\delta \kappa ( \frac{vt}{2} + \mathcal{O}(v^3t^3) ) } \right)^{2\kappa}\\
&=\left(\frac{ e^{\delta \kappa \frac{t}{2} }}{1 + \frac{\delta \kappa t}{2} +\mathcal{O}(v^2t^2 ) } \right)^{2\kappa}.
\end{split}
\end{equation*}
There exists $v_0$ independent of $\delta$ and $t$, such that for $v <v_0$,  
$M_z^{\delta}(\eta)\leq \left(\frac{ e^{\kappa \frac{T}{2} }}{\frac{1}{2} } \right)^{2\kappa}.$

Note that $v <v_0$ corresponds to a open region of $\delta$, namely $\delta \in I_1 \subset [0,1]$. 

 \item Consider $\delta \in [0,1]\setminus I_1$, and then consider $v \cos(\frac{vt}{2})+\delta \kappa \sin( \frac{vt}{2})$ in a small neighborhood of $0$. Near these points, notice that $v$ is not in a small neighborhood of $0$, which implies that $\sin(\frac{vt}{2})$ is not in a small neighborhood of $0$.
 
From $\frac{1}{x^n} e^{-\frac{a}{x}}\rightarrow 0, \,\text{ as }\,x \rightarrow 0, \,\text{ for any }\, a>0$ applied to 
$x=v \cos(\frac{vt}{2})+\delta \kappa \sin( \frac{vt}{2})$, we deduce that  there exists an open subset $I_2 \subset [0,1] \setminus I_1$, such that $M_z^{\delta}(\eta) \leq M_1(T,z,\eta)$, which is independent of $\delta$ and $t$.

 \item Lastly, on $[0,1]\setminus (I_1 \cup I_2)$, which is a closed region and thus compact, $M_z(\eta)$ is well-defined and continuous with respect to $\delta$ and $t$. So there exists $M_2(T,z,\eta)$ independent of $\delta$ and $t$, such that $|M_z^{\delta}(\eta)|\leq M_2(T,z,\eta)$.
\end{itemize}
\end{itemize}

In sum, taking $N=\max\{(2e^{\kappa \frac{T}{2}})^{2\kappa}, M_1, M_2\}$, which is independent of $\delta$ and $t$, we have 
$|M_z^{\delta}(\eta)|\leq N(T,z,\eta)\, \text{(the uniform bound)}$,
as desired.
\end{proof}

By this result, we have the following Proposition:
\begin{proposition}
The process $X$ has finite moments of any order uniformly in $0\leq \delta \leq 1$ for $t\leq T$.
\end{proposition}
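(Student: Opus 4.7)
The plan is to combine the explicit exponential representation of $X^\delta$ with the uniform-in-$\delta$ bound on the moment generating function of the integrated CIR process supplied by Proposition \ref{integrated_moment}. Since $r=0$ has been assumed, solving the SDE \eqref{processXdelta} via It\^o's formula applied to $\log X^\delta$ gives
\begin{equation*}
X_t^\delta = x\exp\!\left(\int_0^t q_s\sqrt{Z_s}\,dW_s - \tfrac12\int_0^t q_s^2 Z_s\,ds\right).
\end{equation*}
Setting $Y_t := \int_0^t q_s\sqrt{Z_s}\,dW_s$, so that $\langle Y\rangle_t = \int_0^t q_s^2 Z_s\,ds \leq u^2\int_0^t Z_s\,ds$, the key algebraic identity that I would use is
\begin{equation*}
(X_t^\delta)^p = x^p\,\bigl[\mathcal{E}_t(2pY)\bigr]^{1/2}\exp\!\left(\tfrac{2p^2-p}{2}\,\langle Y\rangle_t\right),\quad p\geq 1,
\end{equation*}
where $\mathcal{E}_t(\lambda Y) := \exp(\lambda Y_t - \tfrac{\lambda^2}{2}\langle Y\rangle_t)$ denotes the Dol\'eans--Dade exponential. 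The point is that the first factor will have expectation equal to one, while the second can be controlled directly by Proposition \ref{integrated_moment}.

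For the first factor I would verify Novikov's condition for $\mathcal{E}_\cdot(2pY)$: using $\langle Y\rangle_T\leq u^2\int_0^T Z_s\,ds$,
\begin{equation*}
\mathbb{E}_{(t,x,z)}\!\left[\exp\!\left(\tfrac{(2p)^2}{2}\langle Y\rangle_T\right)\right] \leq M_z^\delta(2p^2 u^2),
\end{equation*}
which by Proposition \ref{integrated_moment} is finite and bounded uniformly in $\delta\in[0,1]$ and $t\leq T$. Consequently $\mathcal{E}_\cdot(2pY)$ is a true martingale with $\mathbb{E}[\mathcal{E}_t(2pY)]=1$. Applying Cauchy--Schwarz to the factorization above, together with the same upper bound on $\langle Y\rangle_t$, then yields
\begin{equation*}
\mathbb{E}_{(t,x,z)}[(X_t^\delta)^p] \leq x^p\bigl(\mathbb{E}[\mathcal{E}_t(2pY)]\bigr)^{1/2}\bigl(\mathbb{E}[\exp((2p^2-p)\langle Y\rangle_t)]\bigr)^{1/2} \leq x^p\sqrt{M_z^\delta((2p^2-p)u^2)},
\end{equation*}
which, again by Proposition \ref{integrated_moment}, is majorized by a constant depending only on $T,z,p,u$ but not on $\delta$ or $t$. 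For $0<p<1$, the bound will follow from Jensen's inequality applied to the concave function $y\mapsto y^p$, combined with the fact that $X^\delta$ is itself a true martingale (again by Novikov, applied to $\mathcal{E}(Y)$, which is justified by the same proposition).

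The main obstacle would be ensuring that both exponential-moment estimates remain uniform in $\delta$, but the constants $2p^2u^2$ and $(2p^2-p)u^2$ inside the exponentials are deterministic and independent of $\delta$, so uniformity reduces precisely to the content of Proposition \ref{integrated_moment} --- which is why that result was established immediately beforehand. Everything else (the exponential representation, Cauchy--Schwarz, Novikov's criterion, and the crude bound $q_s\leq u$) is routine.
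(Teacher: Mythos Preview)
Your argument is correct and follows essentially the same route as the paper: solve the SDE explicitly, factor $(X^\delta)^p$ into a Dol\'eans--Dade exponential times an exponential of the integrated variance, verify Novikov via Proposition~\ref{integrated_moment}, and bound the remaining factor by the same proposition. The only difference is cosmetic but worth noting: the paper factors out $\mathcal{E}(nY)$ and passes directly to the bound $x^n e^{nrs}\,M_z^\delta\bigl(\tfrac{(n^2-n)u^2}{2}\bigr)$ without explicitly separating the two random factors, whereas your choice to extract $[\mathcal{E}(2pY)]^{1/2}$ and apply Cauchy--Schwarz makes that separation step rigorous at the cost of a slightly larger constant.
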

\begin{proof}
For the process $X_t$ satisfying the  SDE
\begin{equation*}
dX_t=rX_tdt+q_t\sqrt{Z_t} X_tdW_t,
\end{equation*}
where $q_t \in [d, u]$ and $Z_t$ is the CIR process satisfies SDE \eqref{processZ},
for each finite $n\in \mathbb{Z}$, we have
\begin{equation*}
\begin{split}
X_{s+t}^n =& x^n \exp \left(nrs-\frac{n}{2}\int_t^{s+t} (q_v\sqrt{Z_v})^2 dv+n\int_t^{s+t}{q_v\sqrt{Z_v}}dW_v \right)\\
=&x^n \exp \left(nrs+\frac{n^2-n}{2}\int_t^{s+t} (q_v\sqrt{Z_v})^2 dv \right)\\
&\times \exp \left(-\frac{n^2}{2}\int_t^{s+t} (q_v\sqrt{Z_v})^2 dv+n\int_t^{s+t} {q_v\sqrt{Z_v}}dW_v \right)\\
\leq &x^n \exp \left(nrs+\frac{n^2-n}{2}\int_t^{s+t} u^2 Z_v dv\right)\\
&\times \exp \left(-\frac{n^2}{2}\int_t^{s+t} (q_v\sqrt{Z_v})^2 dv+n\int_t^{s+t} {q_v\sqrt{Z_v}}dW_v \right),
\end{split}
\end{equation*}
and the last step follows by the inequality $q_v < u$.

The Novikov condition is satisfied thanks to Proposition \ref{integrated_moment}, that is,
\begin{equation*}
\begin{split}
\mathbb{E}_{(t,x,z)}\left[\exp \left(\frac{1}{2}\int_t^{s+t} (nq\sqrt{Z_v})^2 dv \right)\right] &\leq \mathbb{E}_{(t,x,z)}\left[\exp \left(\frac{n^2u^2}{2}\int_t^{s+t} Z_v dv \right)\right]\\
&=M_z^{\delta}( \frac{n^2u^2}{2})<\infty.
\end{split}
\end{equation*}

Now we know that
$$\Lambda_s=\exp\left(-\frac{n^2}{2}\int_t^{s+t} (q_v\sqrt{Z_v})^2 dv+n\int_t^{s+t} {q_v\sqrt{Z_v}}dW_v \right)$$
is a martingale.

Hence,
\begin{align*}
\mathbb{E}_{(t,x,z)}[X_{s+t}^n]&\leq x^n \exp(nrs)\mathbb{E}_{(t,x,z)}\left[\exp(\frac{(n^2-n)u^2 }{2}\int_t^{s+t} Z_v dv)\right]\\
&= x^n \exp(nrs)M_z^{\delta} \left(\frac{(n^2-n)u^2 }{2} \right) \\
& \leq x^n \exp(nrs) N(T,z, \frac{(n^2-n)u^2 }{2}), \\
& \leq x^n \exp(nrs) N(T,z, \frac{(n^2-n)u^2 }{2}), &(\text{Proposition} \ref{integrated_moment})
\end{align*}
where the upper bound $x^n \exp(nrs) N(T,z, \frac{(n^2-n)u^2 }{2})$ is independent of $\delta$ and $t$. 
\end{proof}
Therefore,
\begin{equation}
\mathbb{E}_{(t,x,z)}[\int_t^T|X_s|^k ds]\leq \mathbb{E}_{(0,x,z)}[\int_0^T|X_s|^k ds] \leq N_k(T,x,z),
\end{equation}
where $N_k(T,x,z)$ may depend on ($k, T, x, z$) but not on $0\leq \delta\leq 1$.

\section{Proof of Proposition \ref{X_cvg}}
\label{appendix_X_cvg}
Integrating over $[t,T]$ the SDE \eqref{processXdelta} and the SDE \eqref{processX0}, we have 	
\begin{equation}
\label{Int_processXdelta}
X_{T}^{\delta}=x+\int_t^{T} rX_s^{\delta}ds+\int_t^{T} q_s \sqrt{Z_s} X_s^{\delta}dW_s
\end{equation}
and 
\begin{equation}
\label{Int_processX0}
X_{T}^{0}=x+\int_t^{T} rX_s^{0}ds+\int_t^{T} q_s \sqrt{z} X_s^{0}dW_s.
\end{equation}

The difference of \eqref{Int_processXdelta} and \eqref{Int_processX0} is given by
\begin{equation}
\begin{split}
X_{T}^{\delta}-X_{T}^{0}=&\int_t^{T} r(X_s^{\delta}-X_s^{0})ds+\int_t^{T} q_s (\sqrt{Z_s} X_s^{\delta}-\sqrt{z} X_s^{0})dW_s\\
=&\int_t^{T} r(X_s^{\delta}-X_s^{0})ds+\int_t^{T} q_s \sqrt{z} (X_s^{\delta}-X_s^{0})dW_s+\int_t^{T} q_s (\sqrt{Z_s} -\sqrt{z})X_s^{\delta}dW_s.
\end{split}
\end{equation}
Let $Y_s=X_s^{\delta}-X_s^{0}$, then $Y_t=0$ and
\begin{equation}
Y_{T}=\int_t^{T} rY_sds+\int_t^{T} q_s \sqrt{z}Y_sdW_s+\int_t^{T} q_s (\sqrt{Z_s} -\sqrt{z})X_s^{\delta}dW_s.
\end{equation}
Therefore,
\begin{equation}
\label{expectation_Y_T1_square}
\begin{split}
&\mathbb{E}_{(t,x,z)}[Y_{T}^2]\\
\leq &3\mathbb{E}_{(t,x,z)}[(\int_t^{T} rY_sds)^2+(\int_t^{T} q_s \sqrt{z}Y_sdW_s)^2+(\int_t^{T} q_s (\sqrt{Z_s} -\sqrt{z})X_s^{\delta}dW_s)^2]\\
\leq &  \int_t^{T} (3T r^2+3u^2z)\mathbb{E}_{(t,x,z)}[Y_s^2] ds+\underbrace{3u^2 \int_t^{T} \mathbb{E}_{(t,x,z)}[(\sqrt{Z_s} -\sqrt{z})^2 (X_s^{\delta})^2]ds}_{R(\delta)}.
\end{split}
\end{equation}
Notice that only the upper bound of $q$ is used, which gives the uniform convergence in $q$. Also note that using the result that $X_t$ and $Z_t$ have finite moments uniformly in $\delta$, we can show that $|R(\delta)|\leq C\delta$,
where $C=C(T,\theta,u,d,z)$ is independent of $\delta$.

Denote $f({T})=\mathbb{E}_{(t,x,z)}(Y_{T}^2)$ and $\lambda=3T r^2+3u^2z >0$,
then equation \eqref{expectation_Y_T1_square} can be written as 
\begin{equation*}
f({T})\leq  \int_t^{T} \lambda f(s)ds+C \delta \leq  \delta \int_t^{T} C\lambda  e^{\lambda(T-s)}ds+C \delta
\end{equation*}
Therefore, by Gronwall inequality,
\begin{equation*}
\mathbb{E}_{(t,x,z)} (X_{T}^{\delta}-X_{T}^{0})^2=\mathbb{E}_{(t,x,z)}Y_{T}^2=f(T)\leq C'\delta,
\end{equation*}
and the Proposition follows.

\section{Existence and uniqueness of ($X_t^{\ast, \delta}$)}
\label{Existence_and_uniqueness}
For the existence and uniqueness of $X_t^{\ast,\delta}$, we consider the transformation
\begin{equation*}
Y_t^{\ast,\delta}:=\log X_t^{\ast,\delta},
\end{equation*} 
which is well defined for any $t<\tau^{\epsilon}$, where for any $\epsilon >0$,
\begin{equation*}
\begin{split}
\tau^{\epsilon}:&=\inf\{t>0 | X_t^{\ast,\delta}=\epsilon \;\;\text{ or }\;\; X_t^{\ast,\delta}=1/\epsilon  \}\\
&= \inf\{t>0| Y_t^{\ast,\delta}=\log \epsilon \;\;\text{ or }\;\; Y_t^{\ast,\delta}=-\log \epsilon  \}.
\end{split}
\end{equation*} 
By It\^{o}'s formula, the process $Y_t^{\ast,\delta}$ satisfies the following SDE:
\begin{equation}
\label{processY_ast}
dY_t^{\ast,\delta}=-\frac{1}{2}(q^{\ast,\delta})^2 Z_t dt + q^{\ast,\delta}\sqrt{Z_t}dW_t.
\end{equation} 
Note that the diffusion coefficient satisfies
$$q^{\ast,\delta}\sqrt{Z_t} \geq d\sqrt{Z_t} >0,$$ 
and is bounded away from $0$, hence by Theorem 1 in section 2.6 of \cite{krylov2008controlled} and the result 7.3.3 of \cite{stroock2007multidimensional}, the SDE \eqref{processY_ast} has a unique weak solution. Consequently, we have a unique solution to the SDE \eqref{processX_ast} until $\tau^{\epsilon}$ for any $\epsilon>0$. In order to show \eqref{processX_ast} has a unique solution, it suffices to prove that 
\begin{equation*}
\lim_{\epsilon \downarrow 0}\mathbb{Q}(\tau^{\epsilon}<T)=0,
\end{equation*} 
for any $T>0$.

First, for any $t \in [0, T]$,
\begin{equation*}
Y_t^{\ast,\delta}=\int_0^t -\frac{1}{2}(q^{\ast,\delta})^2 Z_s ds + \int_0^t q^{\ast,\delta}\sqrt{Z_s}dW_s.
\end{equation*} 
Then
\begin{equation*}
\begin{split}
\mathbb{E} |Y_t^{\ast,\delta}| &\leq \mathbb{E} \left [\int_0^t \frac{1}{2}(q^{\ast,\delta})^2 Z_s ds \right ] + \mathbb{E} \left [\int_0^t q^{\ast,\delta}\sqrt{Z_s}dW_s \right ]\\
&=A+B.
\end{split}
\end{equation*} 
For term A, by equation \eqref{Z_moment}, we have 
\begin{equation*}
A \leq \frac{1}{2} u^2 \int_0^t \mathbb{E} [Z_s] ds \leq \frac{1}{2} u^2 C_1(T,z).
\end{equation*}
For term B, we have,
\begin{align*}
B &\leq \left(\mathbb{E} \left ( \int_0^t q^{\ast,\delta}\sqrt{Z_s}  dW_s\right )^2 \right )^{1/2}  & (\text{Cauchy--Schwartz inequality})\\
&=\left( \int_0^t \mathbb{E} \left((q^{\ast,\delta})^2 Z_s \right ) ds \right )^{1/2}  &(\text{It\^{o} isometry})\\
&\leq \left( \int_0^t u^2\mathbb{E} (Z_s) ds \right )^{1/2} & (d\leq q^{\ast,\delta} \leq u )\\
& \leq u \sqrt{C_1(T,z)} & (\text{equation}  \eqref{Z_moment}).
\end{align*}
Hence, there exists a constant $D=D(T,z,u)$ such that $\mathbb{E} |Y_t^{\ast,\delta}| \leq D$.

Furthermore, 
\begin{equation*}
\begin{split}
\mathbb{Q}( \sup_{t \in [0, T]} |Y_t^{\ast,\delta}|>|\log \epsilon|  ) &= 1-\mathbb{Q}( |Y_t^{\ast,\delta}| \leq |\log \epsilon|\;\text { for all }0\leq t \leq T)\\
& \leq 1-\mathbb{Q}\left( |Y_t^{\ast,\delta}- \mathbb{E} [Y_t^{\ast,\delta}] | \leq \log |\epsilon|-|\mathbb{E} [Y_t^{\ast,\delta}] |\;\text { for all }0\leq t \leq T\right)\\
\end{split}
\end{equation*}
Here, take $\epsilon$ small enough, depending on $T$, $z$ and $u$, such that
\begin{align*}
\mathbb{Q}( \sup_{t \in [0, T]} |Y_t^{\ast,\delta}|>|\log \epsilon|  ) 
& \leq 1-\mathbb{Q}\left( |Y_t^{\ast,\delta}- \mathbb{E} Y_t^{\ast,\delta}| \leq \frac{\log |\epsilon|}{2}\; \text { for all }0\leq t \leq T \right)\\
& = \mathbb{Q}\left(\sup_{t \in [0, T]}|Y_t^{\ast,\delta}- \mathbb{E} Y_t^{\ast,\delta}| > \frac{\log |\epsilon|}{2}\right )\\
& \leq \frac{ \Var( Y_t^{\ast,\delta}) }{(\frac{\log |\epsilon|}{2})^2},
\end{align*}
where the last step follows from Doob--Kolmogorov inequality.

Last, 
\begin{align*}
\Var( Y_t^{\ast,\delta})&=\mathbb{E} \left( \int_0^t q^{\ast,\delta}\sqrt{Z_s}dW_s  \right)^2 \\
& =\int_0^t \mathbb{E} \left( (q^{\ast,\delta})^2 Z_s\right)ds  & (\text{It\^{o} isometry})\\\
& \leq u^2 C_1(T,z)  & (d\leq q^{\ast,\delta} \leq u \text{ and equation  \eqref{Z_moment}}).  
\end{align*}

Finally, as $\epsilon \downarrow 0$, we can conclude that 
\begin{equation*}
\lim_{\epsilon \downarrow 0}\mathbb{Q}(\tau^{\epsilon}<T)=0,
\end{equation*} 
for any $T>0$, as desired.

\section{Proof of \eqref{term_sum}}
\label{Appendix:proof_of_eq_term_sum}
From \eqref{eq1} and \eqref{tau_inverse_upper_bound}, by decomposing in $\{\sup_{t\leq s \leq T}Z_s \leq M \}$ and $\{\sup_{t\leq s \leq T}Z_s > M \}$ for any $M>z$, we have
\begin{equation}
\begin{split}
\mathbb{E}_{(t,x,z)} \bigg [\int_t^{\tau^{-1}(T)} \mathbbm{1}_{\{|B_v-x_i|<C\sqrt{\delta}\}}dv \bigg ] &\leq \mathbb{E}_{(t,x,z)} \bigg [\int_t^{Du^2T \sup_{t\leq s \leq T}Z_s} \mathbbm{1}_{\{|B_v-x_i|<C\sqrt{\delta}\}}dv \bigg ]\\
&\doteq \circled{1}+\circled{2}.
\end{split}
\end{equation}
Then
\begin{equation}
\label{eq:term1_result}
\begin{split}
\circled{1} \leq & \mathbb{E}_{(t,x,z)} \bigg [\int_t^{Du^2TM} \mathbbm{1}_{\{|B_v-x_i|<C\sqrt{\delta}\}} \mathbbm{1}_{\{\sup_{t\leq s \leq T}Z_s \leq M \}} dv \bigg ]\\
\leq & \mathbb{E}_{(t,x,z)} \bigg [\int_t^{Du^2TM} \mathbbm{1}_{\{|B_v-x_i|<C\sqrt{\delta}\}} dv \bigg ]\\
=& \int_t^{Du^2TM} \mathbb{Q}^B{\{|B_v-x_i|<C\sqrt{\delta}\}} dv\\
\leq & \int_t^{Du^2TM} \frac{2C\sqrt{\delta}}{\sqrt{2\pi v}} dv\\
\leq & \sqrt{\delta} (\frac{4C}{\sqrt{2\pi}} \sqrt{Du^2TM}).
\end{split}
\end{equation}
Let $
\Gamma_s=\int_t^s\sqrt{\delta}\sqrt{Z_v}dW_v^Z$, and integrate the SDE of the process $Z$ over $[t,s]$ for $s\in [t,T]$, we obtain
\begin{equation*}
Z_s =z+\int_t^s\delta \kappa(\theta-Z_v)dv+\Gamma_s.
\end{equation*}
Since $Z_t\geq 0$ and $0 \leq \delta \leq 1$, we have
\begin{equation}
\label{eq3}
\begin{split}
\sup_{t\leq s \leq T}Z_s 
&\leq (z+ \kappa\theta T)+\sup_{t\leq s \leq T}\Gamma_s,
\end{split}
\end{equation}
and then let $M=z+\kappa\theta T+1$, we have
\begin{equation}
\label{eq4}
\mathbbm{1}_{\{\sup_{t\leq s \leq T}Z_s>M \}}\leq \mathbbm{1}_{\{\sup_{t\leq s \leq T}\Gamma_s>1 \}}.
\end{equation}
Therefore, by \eqref{eq3} and \eqref{eq4},
\begin{align*}
\circled{2} \leq & \mathbb{E}_{(t,x,z)} \bigg [\int_t^{Cu^2T\sup_{t\leq s \leq T}Z_s}  \mathbbm{1}_{\{\sup_{t\leq s \leq T}Z_s > M \}} dv \bigg ]\\
\leq &\mathbb{E}_{(t,x,z)} \bigg [ \mathbbm{1}_{\{\sup_{t\leq s \leq T}Z_s > M \}} Cu^2T\sup_{t\leq s \leq T}Z_s \bigg ]\\
=& Cu^2T\mathbb{E}_{(t,x,z)} \bigg [\mathbbm{1}_{\{\sup_{t\leq s \leq T}Z_s > M \}}{\sup_{t\leq s \leq T}Z_s}   \bigg ]\\
\leq & Cu^2T(\mathbb{E}_{(t,x,z)}\bigg [(z+ \kappa\theta T)  \mathbbm{1}_{\{\sup_{t\leq s \leq T}\Gamma_s>1 \}} \bigg ] +\mathbb{E}_{(t,x,z)} \bigg [(\sup_{t\leq s \leq T}\Gamma_s)\mathbbm{1}_{\{\sup_{t\leq s \leq T}\Gamma_s>1 \}} \bigg ])\\
\doteq & Cu^2T(\circled{2.1}+\circled{2.2}).
\end{align*}
Note that $\Gamma_s$ is a martingale and thus $\Gamma_s^2$ is a nonnegative submartingale, thus by Doob's martingale inequality and since the process $Z$ has finite moments uniformly in $\delta$,
\begin{equation}
\label{term2.1}
\circled{2.1}=(z+ \kappa\theta T)  \mathbb{Q} \{\sup_{t\leq s \leq T}\Gamma_s^2>1 \}\leq (z+ \kappa\theta T)\mathbb{E}_{(t,x,z)}[\Gamma_T^2] \leq L\delta,
\end{equation}
where $L$ may depend on $z, \kappa, \theta, T$ but not on $\delta$.

Next, by the Doob's martingale inequality in $L^2$ form, we have
\begin{equation}
\label{term2.2}
\circled{2.2}
\leq \mathbb{E}_{(t,x,z)} \bigg [\sup_{t\leq s \leq T}\Gamma_s^2 \bigg ]
\leq 2 \mathbb{E}_{(t,x,z)} [\Gamma_T^2 ]
 \leq L'\delta.
\end{equation}

Finally, by inequalities \eqref{term_sum}, \eqref{eq:term1_result}, \eqref{term2.1} and \eqref{term2.2}, we have
\begin{equation*}
\begin{split}
&\mathbb{E}_{(t,x,z)}\left[\int_t^T \mathbbm{1}_{\{X_s^{\ast, \delta} \in A_{s,z}^{\delta}\}}\sigma^2(X_s^{\ast, \delta}) ds \right] \\
\leq & \sum_{i=1}^n \mathbb{E}_{(t,x,z)} \bigg [\int_t^T \mathbbm{1}_{\{|X_s^{\ast, \delta}-x_i|<C\sqrt{\delta}\}}\sigma^2(X_s^{\ast, \delta}) ds \bigg ]\\
\leq & n[\circled{1}+\circled{2.1}+\circled{2.2}]\\
\leq & n[\sqrt{\delta} (\frac{4C}{\sqrt{2\pi}} \sqrt{Du^2TM})+L\delta+L'\delta]\\
\leq & C_1\sqrt{\delta},
\end{split}
\end{equation*}
where $C_1$ a positive constant and does not depend on $\delta$.

\section{Proof of Uniform Boundedness of $\rom{1}_2$ and $\rom{1}_3$ on $\delta$} 
\label{sec:Proof_of_Uniform_Boundedness_I2_I3}
With the help of Assumption  \ref{derivative_assumptions}, Cauchy--Schwarz inequality and the uniformly bounded moments of $X_t$ and $Z_t$ processes given in Appendix  \ref{moment_result}, we are going to prove that $\rom{1}_2$ and $\rom{1}_3$ are uniformly bounded in $\delta$.

First recall that
\begin{equation*}
\begin{split}
\rom{1}_2=&\mathbb{E}_{(t,x,z)} \Bigg [\int_t^T \Bigg( \rho (q^{\ast, \delta}) Z_s X_s^{\ast, \delta}\partial_{xz}^2P_1(s, X_s^{\ast, \delta},Z_s)\\
&+\frac{1}{2} Z_s\partial_{zz}^2P_0(s, X_s^{\ast, \delta},Z_s) +\kappa(\theta-Z_s)\partial_z P_0(s, X_s^{\ast, \delta},Z_s)\Bigg) ds \Bigg ],
\end{split}
\end{equation*}
and denote 
\begin{equation*}
\rom{1}_2 \doteq \rom{1}_2^{(1)}+\rom{1}_2^{(2)}+\rom{1}_2^{(3)},
\end{equation*}
Then we have
\begin{equation*}
\label{rom1_2}
\begin{split}
\rom{1}_2^{(1)}
\leq & \mathbb{E}_{(t,x,z)} \left[\int_t^T\rho u Z_s X_s^{\ast, \delta}|\partial_{xz}^2P_1(s, X_s^{\ast, \delta},Z_s)|ds \right]\\
\leq & \rho u\mathbb{E}_{(t,x,z)}^{1/2} \left[\int_t^T \left(Z_s X_s^{\ast, \delta}\right)^2ds \right]\cdot \mathbb{E}_{(t,x,z)}^{1/2} \left[\int_t^T \left(\partial_{xz}^2P_1(s, X_s^{\ast, \delta},Z_s) \right )^2ds \right]\\
\leq & \rho u\mathbb{E}_{(t,x,z)}^{1/4} \left[\int_t^T (Z_s)^4 ds \right]\cdot \mathbb{E}_{(t,x,z)}^{1/4} \left[\int_t^T (X_s^{\ast, \delta})^4 ds \right]\\
&\cdot \bar{a}_{11}^2 \mathbb{E}_{(t,x,z)}^{1/2} \left[\int_t^T \left(1+|X_s^{\ast, \delta}|^{\bar{b}_{11}}+|Z_s|^{\bar{c}_{11}}\right )^2ds \right]\\
\leq & \rho u \left(C_4(T,z) \right)^{1/4} \cdot (N_4(T,x,z))^{1/4} 
\cdot \bar{A}_{11} [C_{2\bar{b}_{11}}(T,z)+ N_{2\bar{c}_{11}}(T,x,z)]^{1/2},
\end{split}
\end{equation*}

\begin{equation*}
\label{rom1_3}
\begin{split}
\rom{1}^{(3)}
\leq &\frac{1}{2}\mathbb{E}_{(t,x,z)}^{1/2} \left[\int_t^T (Z_s)^2 ds \right] \cdot \mathbb{E}_{(t,x,z)}^{1/2} \left[\int_t^T \left (\partial_{zz}^2P_0(s, X_s^{\ast, \delta},Z_s) \right )^2 ds \right]\\
\leq & \frac{1}{2} \left(C_2(T,z) \right)^{1/2} \cdot A_{02} [C_{2 b_{02}}(T,z) + N_{2 c_{02}}(T,x,z)]^{1/2}
\end{split}
\end{equation*}
and
\begin{equation*}
\label{rom1_4}
\begin{split}
\rom{1}^{(4)}
&\leq \kappa \mathbb{E}_{(t,x,z)}^{1/2} \left[\int_t^T(\theta-Z_s)^2 ds] \cdot \mathbb{E}_{(t,x,z)}^{1/2}[\int_t^T \left( \partial_z P_0(s, X_s^{\ast, \delta},Z_s) \right )^2ds \right]\\
&\leq \kappa \mathbb{E}_{(t,x,z)}^{1/2} \left[\int_t^T \theta^2+Z_s^2 ds \right] \cdot \mathbb{E}_{(t,x,z)}^{1/2} \left[\int_t^T \left( \partial_z P_0(s, X_s^{\ast, \delta},Z_s) \right )^2ds \right]\\
& \leq  \frac{1}{2} \left(C_2(T,z) +\theta^2(T-t)\right)^{1/2} \cdot A_{01} [C_{2 b_{01}}(T,z) + N_{2 c_{01}}(T,x,z)]^{1/2},
\end{split}
\end{equation*}
where $\bar{A}_{01}$, $\bar{A}_{11}$ and $A_{02}$ are positive constants. 

Next recall that
\begin{equation*}
\rom{1}_3=\mathbb{E}_{(t,x,z)}\Bigg [\int_t^T  \frac{1}{2} Z_s\partial_{zz}^2P_1(s, X_s^{\ast, \delta},Z_s) +\kappa(\theta-Z_s)\partial_zP_1(s, X_s^{\ast, \delta},Z_s) ds \Bigg ],
\end{equation*}
and denote 
\begin{equation*}
\rom{1}_3 \doteq \rom{1}_3^{(1)}+\rom{1}_3^{(2)}.
\end{equation*}
Then we have
\begin{equation*}
\begin{split}
\rom{1}_3^{(1)}
\leq & \frac{1}{2}\mathbb{E}_{(t,x,z)}^{1/2} \left[\int_t^T (Z_s)^2 ds \right] \cdot \mathbb{E}_{(t,x,z)}^{1/2} \left[\int_t^T \left (\partial_{zz}^2P_1(s, X_s^{\ast, \delta},Z_s) \right )^2 ds \right]\\
\leq & (C_2(T,z))^{1/2} \cdot \bar{A}_{02} [C_{2 \bar{b}_{02}}(T,z) + N_{2 \bar{c}_{02}}(T,x,z)]^{1/2}
\end{split}
\end{equation*}
and
\begin{equation*}
\begin{split}
\rom{1}_3^{(2)}
&\leq 2\kappa \mathbb{E}_{(t,x,z)}^{1/2} \left[\int_t^T \theta^2+Z_s^2 ds \right] \cdot \mathbb{E}_{(t,x,z)}^{1/2} \left [\int_t^T \left( \partial_z P_1(s, X_s^{\ast, \delta},Z_s) \right )^2ds \right]\\
&\leq 2\kappa [ \theta^2(T-t)+ C_2(T,z)]^{1/2} \cdot \bar{A}_{01} [C_{2 \bar{b}_{01}}(T,z) + N_{2 \bar{c}_{01}}(T,x,z)]^{1/2},
\end{split}
\end{equation*}
where $\bar{A}_{01}$, $\bar{A}_{02}$ are positive constants.


\bibliographystyle{siamplain}
\bibliography{references}
\end{document}